\title{Wrong Model, Right Uncertainty: Spatial Associations for Discrete Data with Misspecification}
\author{%
  David R.~Burt \\
  MIT LIDS\\
  Cambridge, MA\\
  \texttt{dburt@mit.edu} \\
  \And
  Renato Berlinghieri \\
  MIT LIDS\\
  Cambridge, MA\\
  \texttt{renb@mit.edu} \\
    \And
  Tamara Broderick \\
  MIT LIDS\\
  Cambridge, MA\\
  \texttt{tamarab@mit.edu} \\
}
\begin{document}

\maketitle
\begin{abstract}
Scientists are often interested in estimating an association between a covariate and a binary- or count-valued response. For instance, public health officials are interested in how much disease presence (a binary response per individual) varies as  temperature or pollution (covariates) increases. Many existing methods can be used to estimate associations, and corresponding uncertainty intervals, but make unrealistic assumptions in the spatial domain. For instance, they incorrectly assume models are well-specified. Or they assume the training and target locations are i.i.d.\ --- whereas in practice, these locations are often not even randomly sampled. Some recent work avoids these assumptions but works only for continuous responses with spatially constant noise. In the present work, we provide the first confidence intervals with guaranteed asymptotic nominal coverage for spatial associations given discrete responses, even under simultaneous model misspecification and nonrandom sampling of spatial locations. To do so, we demonstrate how to handle spatially varying noise, provide a novel proof of consistency for our proposed estimator, and use a delta method argument with a Lyapunov central limit theorem. We show empirically that standard approaches can produce unreliable confidence intervals and can even get the sign of an association wrong, while our method reliably provides correct coverage.
\end{abstract}

\section{Introduction}\label{sec:introduction}
Estimating associations between spatial variables and a binary- or count-valued response is fundamental across scientific disciplines. 
For instance, researchers are interested in (a) how much cardiovascular disease (a binary response per individual) increases with air pollution in Chinese cities \citep{zhao2015association}, (b) how the number of hospital admissions (a count-valued response per hospital) increases with temperature in European cities \citep{michelozzi2009high}, and (c) the extent to which ozone exceeding health guidance (a binary outcome) increases with meteorological variables in major cities in Texas \citep{Vizuete02092022}. Moreover, quantifying uncertainty in these associations is fundamental for scientific and public health decision-making.

There are two natural approaches. (A) We might fit a highly flexible classifier --- e.g., a transformer \citep{NIPS2017_3f5ee243}, or gradient-boosted tree \citep{chen2016xgboost} --- and then apply a post hoc interpretability method \citep[e.g.][]{lundberg2017unified,ribeiro2016should}. But data in these applications are often very sparse in space, so we might hope to estimate an association well even when prediction quality could be very poor. (B) We might fit an interpretable model to start. For instance, when the response is continuous, \citet{buja_models_2019} argue that a linear model can be used to estimate associations even when the data are highly nonlinear --- that is, even when the linear model is (potentially very) misspecified.

Additional challenges arise in the applications described, though. Namely, the spatial locations where we want to draw inferences need not align well with the locations where we have data. E.g., in example (c) above, scientists have access to sensors across the state but are interested in associations in major Texas cities. Moreover, neither the training nor target locations are random. E.g., in Texas, air pollution monitor placement is decided by state and local governments under regulatory constraints from the United States Environmental Protection Agency. And major Texas cities are not randomly sampled from a larger population. For continuous responses, \citet{burt2025lipschitz} address these concerns; they provide confidence intervals that maintain nominal coverage over spatial associations, even when training and target spatial locations can be nonaligned and nonrandom. 

However, their method requires continuous responses with homoskedastic (spatially constant) noise. In all of the applications discussed above, and many other spatial analyses, the response is binary- or count-valued, and so the noise is heteroskedastic in space. To instead provide confidence intervals for binary- or count-valued data, we might naturally think to apply the delta method \citep[Chapter 3]{Vaart_1998} to the estimator from \citet{burt2025lipschitz}. However, the delta method requires a consistent point estimate. In the present work, we show that the point estimate from 
\citet{burt2025lipschitz} is \emph{not} generally consistent.

Therefore, we need both a new estimator, as well as a new confidence interval, for the binary- and count-valued response setting. We provide these in the present work. Along the way, we also provide an estimator and asymptotically valid confidence intervals for continuous responses with spatially varying noise. In particular, we suggest a new point estimate inspired by \citet{buja_models_2019}, \citet{buja2019modelsii}, \citet{burt2024consistent}, and \citet{burt2025lipschitz}; our estimate starts from a (misspecified but interpretable) generalized linear model (GLM) but takes into account nonrandom and nonaligned sampling of spatial locations. We show that in an \emph{infill asymptotic setting}, where we have a sequence of spatial locations that eventually becomes dense in space but may not be sampled from any probability measure, our estimator is consistent. This consistency requires adaptivity; we demonstrate that the estimator of \citet{burt2025lipschitz} and standard GLM point estimates using the training data are generally not consistent in this setting. We establish asymptotic normality of our estimator under conditions strictly more general than assuming training locations are sampled from a distribution supported around target locations. Our approach requires a Lyapunov central limit theorem applicable to non-identically distributed data. We propose a new, computationally efficient variance estimator suitable for problems with spatially varying noise and prove its consistency under infill asymptotics. Combining these results, we propose confidence intervals that can be computed efficiently from the available data, and prove that these confidence intervals are asymptotically conservative.

Our simulations demonstrate that existing methods can lead to fundamentally incorrect conclusions. In some cases, all baseline confidence intervals achieve zero empirical coverage and produce associations with the wrong sign while excluding zero. Our method consistently achieves coverage at or above the nominal level and never produces wrong-signed associations with confidence intervals excluding zero. Importantly, one simulation requires extrapolation, demonstrating that even when infill assumptions are unrealistic, our approach often provides conservative uncertainty estimates.

\section{Setup and Background}\label{sec:setup}
We first describe our data and data-generating process. Then we describe our (misspecified) model and estimand. Our assumed data-generating process and estimand in this section are similar to those in \citet{burt2025lipschitz}. Our estimator, theory, and experiments form our major contributions (in subsequent sections) and are substantially different from \citet{burt2025lipschitz}.

\subsection{Data-Generating Process}\label{sec:dgp} The training data consist of $N$ fully observed triples $(S_n, X_n, Y_n)_{n=1}^N$, with spatial location $S_n \in \spatialdomain$, covariate $X_n \in \RR^P$, and response $Y_n \in \Yspace \subset \RR$. While our motivation and experiments focus on $\Yspace = \{0,1\}$ (binary-valued) or $\Yspace = \mathbb{N}$ (count-valued), our treatment also handles $Y_n \in \RR$. $\spatialdomain$ represents geographic space; we assume $\spatialdomain$ is a metric space with metric $d_{\spatialdomain}$. We collect the training covariates in the matrix $X \in \RR^{N \times P}$ and the training responses in the $N$-tuple $Y\in \Yspace^{N}$.

The target data consist of $M$ pairs $(\Sstar_m, \Xstar_m)_{m=1}^M$, with $\Sstar_m \in \spatialdomain$, $\Xstar_m \in \RR^{P}$. The corresponding responses $\{\Ystar_m\}_{m=1}^M$ are unobserved. We collect target covariates in $\Xstar \in \RR^{M \times P}$ and unobserved target responses in a tuple $\Ystar \in \Yspace^{M}$. Our goal is to use the training data to estimate associations between covariates and responses at these new target locations.

\textbf{Similar assumptions to past work.}
Our first three assumptions follow \citet{burt2025lipschitz} in allowing a smooth, nonparametric relationship between spatially varying variables. We start by assuming that both training and target covariates are fixed functions of spatial location. This assumption is most natural when covariates represent environmental or meteorological measurements taken at specific times, or averaged over a time period.

\begin{assumption}[\citet{burt2025lipschitz}, Assumption 1]\label{assum:cov-fixed-fns}
    There exists a (deterministic) function $\chi: \spatialdomain \to \RR^P$ such that $\Xstar_m =\chi(\Sstar_m)$ for $1 \leq m \leq M$ and $X_n = \chi(S_n)$ for $1 \leq n \leq N$.
\end{assumption}

As in \citet{burt2025lipschitz}, we assume that the conditional expectation of the response can be written as
$\EE[Y_n|X_n, S_n] = g(X_n, S_n)$, for some nonparametric function $g$. Under \cref{assum:cov-fixed-fns}, the covariates are themselves fixed functions of location, so we can define $f: \spatialdomain \to \RR$, $f(S) = g(\chi(S), S)$. In other words, $f$ maps each spatial location directly to the expected value of the response at that location. Importantly, unlike \citet[Assumption 2]{burt2025lipschitz}, we do not assume homoskedastic, Gaussian noise; we instead allow spatially varying noise and discrete response variables. 
\begin{assumption}\label{assum:test-train-dgp}
    There exists a function $f : \spatialdomain \to \RR$ such that for all $ m \in \{1,\ldots,M\}, \EE[\Ystar_m|\Sstar_m] = f(\Sstar_m)$ and for all $ n \in \{1,\ldots,N\}, \EE[Y_n |S_n] = f(S_n)$. Moreover, $\Ystar_m | \Sstar_m$ and $Y_n | S_n$ are independent for all $1 \leq m \leq M$ and $1 \leq n \leq N$.
\end{assumption}

\Cref{assum:lipschitz} encodes the idea that nearby points in space have similar expected responses. Intuitively, it rules out arbitrarily sharp changes in $f$ across very small spatial distances. This pattern is common in environmental and geostatistical data, where smooth spatial variation is a natural prior belief.
\begin{assumption}[\citealt{burt2025lipschitz}, Assumption 4]\label{assum:lipschitz}
    The conditional expectation of the response, $f$, is an $L$-Lipschitz function from $(\spatialdomain, d_{\spatialdomain})  \to (\RR, |\cdot|)$. That is, for any $s, s' \in \spatialdomain$,
        $|f(s) - f(s')| \leq L d_{\spatialdomain}(s,s').$
\end{assumption}

\textbf{New data-generating process assumptions.} Because we do not assume spatially constant Gaussian errors on the responses, we need assumptions that control the tail behavior of the possible responses. Our next three assumptions concern higher moments of the response as a function of spatial location. Specifically, we assume that we can define a conditional variance function and a conditional fourth central moment function, and that these functions are bounded (and, for the variance, continuous). These conditions are generally quite mild. For binary responses, these assumptions hold automatically: the variance is bounded because the outcome is bounded, and continuity of the mean (from \cref{assum:lipschitz}) already implies continuity of the variance. For count and continuous responses, it is natural to expect that the probability mass or density of the outcome varies smoothly across space. This intuition is even stronger than required here, since smoothness of the probability distribution implies continuity of the variance. Finally, for any uniformly bounded response, both the bounded variance (\cref{assum:bounded-variance}) and bounded fourth moment (\cref{assum:bounded-4th-moment}) conditions follow immediately.

\begin{assumption}\label{assum:bounded-variance}
    There exists a conditional variance function $\rho^2: \spatialdomain \to [0, \infty)$ defined by $\rho^2(s) = \EE[ (Y(S) - f(S))^2 | S=s]$, and this function is uniformly bounded by a constant $B_Y$.
\end{assumption}

\begin{assumption}\label{assum:variance-continuous}
    The function $\rho^2$ from \cref{assum:bounded-variance} is continuous on $\spatialdomain$.
\end{assumption}

\begin{assumption}
    \label{assum:bounded-4th-moment}
    There exists a conditional fourth central moment function $\alpha: \spatialdomain \to [0, \infty)$ defined by $\alpha(s) = \EE[ (Y(S) - f(S))^4 | S=s]$, and this function is uniformly bounded by a constant $C$.
\end{assumption}

\subsection{Model and Estimand}\label{sec:model-and-estimand}
 
Generalized linear model coefficients describe the direction and magnitude of the associations between covariates and discrete response variables, and will be our inferential target. A (well-specified) GLM assumes that --- for a covariate-response pair $(x,y)$ --- the distribution of the response $y$ has probability mass function
$
    h(y; \theta) = c(y) \exp(\theta y - \kappa(\theta)),
    \theta = x^{\transpose}\beta^{\star}
$ \citep{nelder1972glm,mccullagh1989generalized}
where $\theta$ is the canonical parameter, $\kappa$ is the cumulant generating function, $c(y)$ is a base measure, and $\beta^\star$ are the true regression coefficients. $\kappa$ is convex and infinitely differentiable. The data log-likelihood is
\begin{align}\label{eqn:glm-log-likelihood}
    \ell(\beta; Y) = C + \sum_{n=1}^N X^{\transpose}_n\beta Y_n - \kappa(X^{\transpose}_n\beta),
\end{align}
where $C$ is a term that does not depend on $\beta$. Under a well-specified model with independent and identically distributed (i.i.d.) data and mild regularity conditions, the maximum likelihood estimator obtained by maximizing \cref{eqn:glm-log-likelihood} converges to the true coefficients $\beta^\star$ \citep{wald1949consistency}. In contrast, when the model is misspecified, maximizing the log-likelihood instead yields the coefficients that minimize the Kullback–Leibler (KL) divergence between the model and the true data-generating process \citep{white1982misspecified}. In either case, the estimator is asymptotically normal. We discuss the use of asymptotic normality to construct confidence intervals for parameters in well-specified GLMs, as well as other approaches for constructing confidence intervals in GLMs in \cref{app:alternative-cis}.

\textbf{Our Maximum Likelihood Estimand.}
Our goal is to describe how covariates are associated with the response variable at the target locations, using data observed at the training locations. Because these two sets of locations may differ, we define our estimand as the parameter in the (parametric) GLM family considered that provides the best approximation to the true response process at the target distribution of locations. This generalizes the least squares approach considered in \citet{burt2025lipschitz} to other (non-Gaussian) exponential families and follows the general framework of fitting parametric models as `projections' outlined in \citet[§2.1]{buja2019modelsii}. Formally, we define the population maximum likelihood parameter conditional on the target locations as
\begin{align}\label{eqn:pop-max-likelihood}
    \betamle &= \arg \max_{\beta \in \RR^P} \sum_{m=1}^M  \EE[\log h (\Ystar_m; \Xstart_m\beta)|\Sstar_m]. 
\end{align}
In \cref{app:target-ml-minimizes-kl}, we show that $\betamle$ equivalently minimizes the Kullback–Leibler divergence between the data-generating process and the GLM family, conditional on the distribution over locations taken to be the target distribution.

\begin{assumption}\label{assum:beta-mle-exists-strictly-concave}
There exists a parameter $\betamle$ solving \cref{eqn:pop-max-likelihood}, and the corresponding population log-likelihood is strictly concave in an open neighborhood containing $\betamle$.
\end{assumption}

\Cref{assum:beta-mle-exists-strictly-concave} guarantees uniqueness of the estimator and ensures that the Hessian of the log-likelihood is positive definite at $\betamle$. In the case of linear models, a necessary and sufficient condition is that $\Xstar$ is full-rank \citep[c.f.][Assumption 4]{burt2025lipschitz}. More generally, it is necessary that $\Xstar$ is full rank, though not always sufficient. Intuitively, this condition prevents attempting to estimate more parameters than there are independent pieces of information at the target sites. In what follows, we focus on inference --- both point estimates and confidence intervals --- for individual parameters of interest, $\betamle_p = e_p^{\transpose}\betamle$, where $e_p \in \RR^{P}$ is the unit vector selecting the $p$th component (i.e., with a single $1$ at entry $p$ and $0$ elsewhere).

\section{Inference for Misspecified GLMs Under Infill Asymptotics}\label{sec:inference}

In this section, we describe our procedure for inference in generalized linear models with misspecification and nonrandom spatial sampling. 

\textbf{Overview of Inference Strategy.} A desirable property for an estimator is consistency: with enough training data, the estimator should converge to the estimand, the true underlying quantity of interest. In our spatial setting, however, it is not just the amount of training data that matters, but also where the data are located. This naturally leads to the framework of \emph{infill asymptotics}, which considers the case where increasingly many training points are observed in the neighborhoods of the fixed target locations. In \cref{sec:point-estimation}, we show that existing methods are not necessarily consistent even in this idealized setting, and propose an estimator that is. While estimating an association consistently is reassuring for many scientific applications, it is also important to quantify uncertainty about the quality of this point estimate. In \cref{sec:confidence-intervals}, we use a Lyapunov central limit theorem (for non-identically distributed data) to show our point estimate is asymptotically normal. This allows us to construct confidence intervals around our point estimate that are asymptotically valid. These confidence intervals depend on the (unknown) variance of the response at the target locations. We propose a computationally efficient estimator for this spatially varying variance, and prove its consistency under infill asymptotics. 

\subsection{Consistency under Infill Asymptotics}\label{sec:point-estimation}
We adopt the infill asymptotic framework of, e.g., \citet[§5.8]{cressie2015statistics} and \citet[§3]{burt2024consistent}.

\begin{definition}[Infill Asymptotics]\label{def:infill-asymptotics}
    Given a (fixed) set of target locations $(\Sstar_m)_{m=1}^M$, a sequence of training locations $(S_n)_{n=1}^\infty$ satisfies infill asymptotics with respect to $(\Sstar_m)_{m=1}^M$ if, for all $1 \leq m \leq M$, and any open neighborhood $U_m$ containing $\Sstar_m$, 
    $
    |\{n \in \NN: S_n \in U_m\}| = \infty.
    $
\end{definition}
Intuitively, infill asymptotics requires that around each target location, the training set becomes arbitrarily dense as the sample size grows. In \cref{app:point-estimation-ce} we give an example showing that even under favorable conditions --- Gaussian noise and smooth response surface ---  both the point estimate based on 1-nearest-neighbor considered in \citet{burt2025lipschitz} and the ordinary least squares estimate can fail to achieve consistency under infill asymptotics with model misspecification.

\textbf{A Consistent Estimator under Infill Asymptotics.}
We develop an estimator that is consistent under infill asymptotics. Our approach builds on the intuition of \citet{burt2025lipschitz}, who proposed borrowing training responses to estimate (unobserved) responses at target locations. However, the key modification we introduce to ensure consistency is to allow the number of neighbors used for borrowing to grow adaptively with the size of the training set. \Citet{burt2024consistent} relied on a similar adaptive construction to show consistency in the simpler setting of mean estimation.

Define the function $\tau: \RR^{M} \to \RR^P$, 
$
    \tau(A) = \arg \max_{\beta \in \RR^P}  \sum_{m=1}^M \Xstart_m\beta A_m - \kappa(\Xstart_m\beta).
$
The estimand (\cref{eqn:glm-log-likelihood,eqn:pop-max-likelihood}) is
$ 
    \betamle = \tau(\EE[\Ystar|\Sstar]).
$
 Our strategy is to average information from responses near each target point to build an estimator, $\hat{A}$, for $\EE[\Ystar|\Sstar]$. And then to use $\tau(\hat{A})$ as an estimator for $\betamle$. To instantiate this, we follow \citet[Definition 10]{burt2025lipschitz} and use a nearest-neighbor weighting scheme.

\begin{definition}[Nearest-Neighbor Weight Matrix]\label{def:knn-psi}
    Given training locations $(S_n)_{n=1}^N$, target locations $(\Sstar_m)_{m=1}^M$, and a fixed $k_N \in \NN$, define the $k_N$-nearest-neighbor weight matrix by
    \begin{align}
        \Psi^{N,k_N}_{mn} & = \begin{cases}
            1 /k_N & S_n \in \{ k_N \text{ \small closest training locations to } \Sstar_m \}\\
            0 & \text{ \small otherwise.}
        \end{cases}
    \end{align}
    For definiteness, we assume that, if multiple training locations are equidistant from a target, ties are broken uniformly at random. 
\end{definition}

This yields an estimator that we can calculate from the observed data:
\begin{align}
    \pointestimate= \tau(\Psi^{N, k_N} Y) = \arg \max_{\beta \in \RR^P}  \sum_{m=1}^M \Xstart_m\beta (\Psi^{N,k_N}Y)_m - \kappa(\Xstart_m\beta). \label{eqn:nn-max-likelihood-point-estimate}
\end{align}

\Citet{burt2025lipschitz} proposed the same estimator with $k_N = 1$, so that each target location borrows information only from its closest training neighbor. While this approach may be adequate empirically when the number of target locations is large, \cref{ce:point-estimation} shows that it fails to deliver consistency under infill asymptotics. Since consistency of $\hat{\beta}_N$ is a prerequisite for establishing asymptotic normality of our estimator, a more robust choice of $k_N$ is required. We propose an adaptive rule for selecting $k_N$: the key idea is to gradually increase the number of neighbors whenever the current neighbors (including the newly observed training location) are all sufficiently close to the target sites.

\begin{theorem}\label{thm:point-estimate-consistent-infill}
     Fix any $M \in \mathbb{N}$ and $(\Sstar_m)_{m=1}^M$. Let $(S_n)_{n=1}^{\infty}$ be a sequence of points in $\spatialdomain$ such that infill asymptotics holds with respect to $(\Sstar_m)_{m=1}^M$. Suppose \cref{assum:cov-fixed-fns,assum:test-train-dgp,assum:beta-mle-exists-strictly-concave,assum:lipschitz,assum:bounded-variance}.
    With adaptively chosen neighbors as discussed in \cref{thm:point-estimate-consistent-infill-complete}, $\pointestimate \to \betamle$, where convergence is in probability.  
\end{theorem}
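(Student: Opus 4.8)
The plan is to exploit the compositional structure $\pointestimate = \tau(\Psi^{N,k_N} Y)$ and $\betamle = \tau(\EE[\Ystar \mid \Sstar])$, and to argue consistency in two stages. First, I would show that the plug-in vector $\hat A := \Psi^{N,k_N} Y \in \RR^M$ converges in probability to $A^\star := \EE[\Ystar \mid \Sstar] = (f(\Sstar_1),\dots,f(\Sstar_M))$. Second, I would show that the solution map $\tau$ is continuous at $A^\star$. The conclusion then follows from the continuous mapping theorem: with probability tending to one, $\hat A$ lies in a neighborhood of $A^\star$ on which $\tau$ is well-defined, single-valued, and continuous, so $\tau(\hat A) \to \tau(A^\star) = \betamle$ in probability.

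For the first stage, fix a coordinate $m$ and analyze $\hat A_m = k_N^{-1}\sum_{n \in \mathcal N_m} Y_n$, where $\mathcal N_m$ indexes the $k_N$ training locations nearest to $\Sstar_m$, via a bias--variance decomposition conditional on the (deterministic) location sequence. The bias is $k_N^{-1}\sum_{n\in\mathcal N_m}(f(S_n) - f(\Sstar_m))$, which \cref{assum:lipschitz} bounds by $L\, R_m$, with $R_m$ the distance from $\Sstar_m$ to its $k_N$-th nearest neighbor. Because the responses are conditionally independent given the locations (\cref{assum:test-train-dgp}) with variance bounded by $B_Y$ (\cref{assum:bounded-variance}), the conditional variance of $\hat A_m$ is at most $B_Y / k_N$. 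Hence the mean squared error is at most $L^2 R_m^2 + B_Y/k_N$, which vanishes provided the adaptive rule drives $k_N \to \infty$ while simultaneously forcing $R_m \to 0$. Coordinatewise $L^2$ convergence gives coordinatewise convergence in probability, and since $M$ is finite this upgrades to joint convergence $\hat A \to A^\star$.

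The second stage is comparatively routine. Writing $G(\beta, A) = \sum_m \Xstart_m\beta\, A_m - \kappa(\Xstart_m\beta)$, the defining first-order condition $\nabla_\beta G(\tau(A), A) = 0$, together with strict concavity at $\betamle$ (\cref{assum:beta-mle-exists-strictly-concave}, i.e.\ the negative-definite Hessian $-\sum_m \kappa''(\Xstart_m\betamle)\,\Xstart_m^\transpose \Xstart_m$) and the infinite differentiability of $\kappa$, lets me invoke the implicit function theorem at $(\betamle, A^\star)$. This yields a neighborhood of $A^\star$ on which $\tau$ is a $C^1$ function, hence continuous at $A^\star$, and global concavity of $G$ in $\beta$ ensures the resulting local stationary point is the global maximizer that defines $\pointestimate$.

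The main obstacle is the simultaneous control $k_N \to \infty$ and $R_m \to 0$ under \cref{def:infill-asymptotics}, which supplies \emph{no rate}: we only know that every neighborhood of each $\Sstar_m$ eventually contains arbitrarily many training points, so a fixed deterministic schedule for $k_N$ cannot be guaranteed to work. The key observation is that, for any fixed $k$, the $k$-th nearest-neighbor distance is nonincreasing in $N$ and tends to $0$ as $N\to\infty$, since infill eventually forces at least $k$ points into every shrinking ball around $\Sstar_m$. The adaptive rule referenced in \cref{thm:point-estimate-consistent-infill-complete} leverages exactly this: it increments the neighbor count only once the candidate neighbors for every target fall within a threshold that is shrunk over time. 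I would show that this threshold mechanism forces $R_m \to 0$ by construction, while the fact that each fixed-$k$ distance eventually drops below any threshold guarantees the count is incremented infinitely often, so $k_N \to \infty$. Verifying that these two requirements are met jointly for all $M$ targets --- using a rule that depends only on the locations and not the responses, so that the conditioning in the bias--variance step remains valid --- is the crux of the argument.
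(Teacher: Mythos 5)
Your proposal is correct and follows essentially the same route as the paper: a bias--variance (Lipschitz plus Chebyshev) argument showing $\Psi^{N,k_N}Y \to \EE[\Ystar\mid\Sstar]$, an implicit-function-theorem argument for continuity of $\tau$ near $\EE[\Ystar\mid\Sstar]$, the continuous mapping theorem, and the same contradiction-style argument that the adaptive, location-only rule forces both $k_N \to \infty$ and the nearest-neighbor radii to vanish. The only cosmetic difference is that the paper bounds the bias by the \emph{average} neighbor distance before passing to the maximum, whereas you go directly to the $k_N$-th nearest-neighbor distance; both yield the same conclusion.
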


The proof of \cref{thm:point-estimate-consistent-infill} as well as a formal characterization of the adaptive scheme for selecting the number of neighbors are provided in \cref{app:proof_point_estimation}. Intuitively, the procedure adapts the number of neighbors so that as training data accumulate near the targets, the estimator gradually incorporates more information without sacrificing local accuracy. 

\textbf{Limitations when Extrapolating.} In cases where extrapolation is needed because the training data are not available near the target locations (either because of finite data or because the distribution of training locations is not supported near the target locations), we cannot hope to estimate $\betamle$ arbitrarily well. In particular, we simply do not know how $\EE[\Ystar_m | \Sstar_m]$ behaves in the extrapolation setting, and our assumptions together with the data are not strong enough for $\betamle$ to be identified. Our approach therefore focuses on the regime where infill asymptotics holds, which is precisely the setting where consistent estimation is achievable.

\subsection{Asymptotically Valid Confidence Intervals}\label{sec:confidence-intervals}
 We now focus on quantifying uncertainty around $\pointestimate$. Our focus is on the construction of confidence intervals that are (asymptotically) guaranteed to achieve nominal coverage. Precisely, we will construct confidence intervals that satisfy the following under our data generating assumptions and infill asymptotics.

 \begin{definition}[Asymptotically Conservative Confidence Interval]\label{def:asymptotic-conservative}
    For any $1 \leq p \leq P$ and any $\alpha \in (0,1)$ a sequence of confidence intervals $(I_{p, N}^{\alpha})_{N=1}^\infty$ is asymptotically conservative if
        $
        \lim_{N \to \infty} \PP(\betamle_p \in I^{\alpha}_{p,N}) \geq 1 - \alpha. 
        $  
\end{definition}

\textbf{Asymptotic Normality.}
Constructing confidence intervals for arbitrary random variables is challenging. But constructing confidence intervals for normal random variables is easier, and so we follow a classical approach to deriving confidence intervals in which we first show that our estimator is asymptotically normal. We use a Lyapunov central limit theorem together with the delta method \citep[Chapter 3]{Vaart_1998}, to show that under the same setup as \cref{thm:point-estimate-consistent-infill}
\begin{align}\label{eqn:linearization-approximation}
    \sqrt{k_N}(\betamle - \pointestimate) 
    &\to \mathcal{N}(B, \tau'(\EE[\Ystar|\Sstar])^{\transpose} \Lambda^{\star} \tau'(\EE[\Ystar|\Sstar])), \\
    B =\tau'(\EE[\Ystar|\Sstar])^{\transpose}(\EE[\Ystar|\Sstar] &- \Psi^{N,k_N}\EE[Y|S]) \quad \text{and} \quad \Lambda^{\star}_{mm'} = \delta_{mm'}\Var[\Ystar_m |\Sstar_m]. \nonumber
\end{align}
Here $\tau'$ maps from a point in $\RR^{M}$ to the Jacobian of $\tau$ at that point. and $\delta_{mm'}$ is a Kronecker delta, so $\Lambda^{\star}$ is diagonal. A formal statement and proof are in \cref{app:proof_confidence_interval_full}, \cref{thm:point-estimate-asymptotic-normality}. \Cref{eqn:linearization-approximation} depends on $\tau'(\EE[\Ystar|\Sstar])$, which is not observed. In practice and in our later theory, we use a (consistent) point estimate for this Jacobian $\tau'(\Psi^{N,k_N} Y)$. 

\textbf{Bounding the Bias.}
We need to control the bias, $B$. After replacing $\EE[\Ystar|\Sstar]$ with $\Psi^{N,k_N} Y$ each coordinate of the bias is a linear combination of evaluations of the conditional expectation of the response, $f$, at training and target locations. \Citet[Appendix B.2]{burt2025lipschitz} showed that such a linear combination can be bounded in terms of a 1-Wasserstein distance that is efficiently computable. We provide additional detail in \cref{prop:bound-bias}.

\textbf{Plug-in Estimate of $
\Var[\Ystar|\Sstar]$.} We do not have access to $\Var[\Ystar_m | \Sstar_m]$ for $1 \leq m \leq M$, which is needed to compute the variance of the point estimate.  We propose a nearest-neighbor approach.
\begin{definition}[Nearest-Neighbor Variance Estimator]\label{def:nn-variance-estimator}
    For each $1 \leq n \leq N$, let $\zeta^N(n')$ be the index of the nearest-neighbor of $S_{n'}$ in the other training data $(S_{n})_{n=1, n\neq n'}^N$. Define the diagonal matrix $\Lambda^{N} \in \RR^{N \times N},$ 
    $
        \Lambda^{N}_{nn} = \frac{1}{2}(Y_n - Y_{\zeta^N(n)})^2.
    $ 
\end{definition}

We show in \cref{app:consistency_variance_estimate} that, assuming infill asymptotics, $k_N\Psi^{N,k_N} \Lambda^{N} \Psi^{N,k_N\transpose} \to \Lambda^{\star}$. \Citet{burt2025lipschitz} proposed to use $\frac{1}{N}\mathrm{tr}(\Lambda^{N})$ to estimate the noise variance in homoskedastic linear regression, but did not establish its consistency or propose how to handle spatially varying noise.

\textbf{Statement of Confidence Intervals.} We now have the ingredients to define our confidence interval:
\begin{align}\label{eqn:confidence-interval}
    I^{\alpha}_{p, N} &= \left[\pointestimate_p - z_{\alpha/2} \hat{\sigma}_p - \tilde{B}_p, \pointestimate_p + z_{\alpha/2} \hat{\sigma}_p + \tilde{B}_p\right], \\
    \text{with }
        \hat{\sigma}_p = \|(\Lambda^N)^{1/2}(\Psi^{N, k_N})^{\transpose}&\tau'(\Psi^{N, k_N} Y )e_p\|_2, \,\ \tilde{B}_p = L\sup_{f \in \lipfnsone} \Big\vert \sum_{n=1}^N v^{N}_n f(S_n) - \sum_{m=1}^M w^{N}_m f(\Sstar_m)\Big\vert, \nonumber
\end{align}
Here $z_{\alpha/2}$ is the $(1-\alpha/2)$ quantile of the standard normal distribution; $e_p\in\RR^P$ is the $p$th standard basis vector; $w^{N} \,=\, \Xstar\,\tau'\!\big(\Psi^{N,k_N}Y\big)\,e_p$; and $v^{N} \,=\, \Psi^{N,k_N}\,w^{N}$. The set $\lipfnsone$ denotes the $1$-Lipschitz functions on $(\spatialdomain,d_{\spatialdomain})$. 
We use $\|\cdot\|_2$ for the Euclidean ($\ell_2$) norm on vectors. A classic confidence interval $[\pointestimate_p \pm z_{\alpha/2}\tilde{\sigma}_p]$ uses model-trusting standard errors and does not account for potential bias due to model-misspecification and nonrandom sampling. Sandwich estimators use standard errors that are valid under misspecification, but still do not account for potential bias because of the interaction between misspecification and nonrandom sampling. Our confidence interval, \cref{eqn:confidence-interval} uses standard errors that are still valid under misspecification, and accounts for potential bias.

\textbf{Asymptotic Validity of Confidence Intervals.} We now state our main result, that the confidence interval in \cref{eqn:confidence-interval} is conservative under infill asymptotics. We prove \cref{thm:asymptotic-conservative} in \cref{app:proof_confidence_interval}.

\begin{theorem}\label{thm:asymptotic-conservative}
    Take the setup and assumptions of \cref{thm:point-estimate-consistent-infill}. Suppose the number of neighbors is chosen as in \cref{thm:point-estimate-consistent-infill-complete} with $a_t = \frac{1}{\sqrt{t}}$ for $t\in \NN$ and \cref{assum:variance-continuous,assum:bounded-4th-moment}. Then the confidence interval defined in \cref{eqn:confidence-interval} is asymptotically conservative.
\end{theorem}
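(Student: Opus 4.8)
The plan is to reduce the coverage statement to a standard Slutsky-type argument. Membership $\betamle_p \in I^{\alpha}_{p,N}$ is equivalent to $|\pointestimate_p - \betamle_p| \le z_{\alpha/2}\hat{\sigma}_p + \tilde{B}_p$, so after multiplying through by $\sqrt{k_N}$ I would work with $\sqrt{k_N}(\pointestimate_p - \betamle_p)$. Using the asymptotic normality result behind \cref{eqn:linearization-approximation} (\cref{thm:point-estimate-asymptotic-normality}), I would write $\sqrt{k_N}(\pointestimate_p - \betamle_p) = Z_N - b_N + R_N$, where $Z_N = e_p^{\transpose}\tau'(\EE[\Ystar|\Sstar])^{\transpose}\sqrt{k_N}\,\Psi^{N,k_N}(Y - \EE[Y|S])$ is the mean-zero noise term that a Lyapunov central limit theorem sends to $\mathcal{N}(0, V_p)$ with $V_p = e_p^{\transpose}\tau'(\EE[\Ystar|\Sstar])^{\transpose}\Lambda^{\star}\tau'(\EE[\Ystar|\Sstar])e_p$; $b_N = \sqrt{k_N}\,e_p^{\transpose}\tau'(\EE[\Ystar|\Sstar])^{\transpose}(\EE[\Ystar|\Sstar] - \Psi^{N,k_N}\EE[Y|S])$ is the scaled deterministic bias; and $R_N = o_p(1)$ collects the delta-method remainder. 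The goal is then to show (i) $\sqrt{k_N}\hat{\sigma}_p \to \sqrt{V_p}$ in probability, and (ii) $\sqrt{k_N}\tilde{B}_p$ asymptotically dominates $|b_N|$, after which everything is assembled by Slutsky.

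For (i), write $c = \tau'(\EE[\Ystar|\Sstar])e_p \in \RR^M$ and its plug-in $\hat{c} = \tau'(\Psi^{N,k_N}Y)e_p$, so that $k_N\hat{\sigma}_p^2 = \hat{c}^{\transpose}\big(k_N\Psi^{N,k_N}\Lambda^{N}(\Psi^{N,k_N})^{\transpose}\big)\hat{c}$. I would invoke the variance-estimator consistency proved in \cref{app:consistency_variance_estimate}, namely $k_N\Psi^{N,k_N}\Lambda^{N}(\Psi^{N,k_N})^{\transpose} \to \Lambda^{\star}$ (this is where \cref{assum:variance-continuous,assum:bounded-4th-moment} enter, the fourth-moment bound giving an $L^2$/weak-law control of the nearest-neighbor squared differences and continuity identifying the limit with the diagonal of target variances), together with Jacobian consistency $\hat{c} \to c$. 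The latter follows from \cref{thm:point-estimate-consistent-infill} ($\pointestimate \to \betamle$) plus continuity of $A \mapsto \tau'(A)$, which holds because $\kappa$ is smooth and \cref{assum:beta-mle-exists-strictly-concave} keeps the relevant Hessian invertible on a neighborhood. Continuous mapping then gives $k_N\hat{\sigma}_p^2 \to c^{\transpose}\Lambda^{\star}c = V_p > 0$ and hence $\sqrt{k_N}\hat{\sigma}_p \to \sqrt{V_p}$.

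For (ii), I would apply the Lipschitz/Wasserstein bias bound (\cref{prop:bound-bias}, following \citet[Appendix B.2]{burt2025lipschitz}): since each row of $\Psi^{N,k_N}$ sums to one, the coefficient vector placing $c_m$ on $\Sstar_m$ and $-((\Psi^{N,k_N})^{\transpose}c)_n$ on $S_n$ has total mass zero, so $f/L \in \lipfnsone$ yields $|b_N| \le \sqrt{k_N}\,L\sup_{g \in \lipfnsone}\big|\sum_m c_m g(\Sstar_m) - \sum_n ((\Psi^{N,k_N})^{\transpose}c)_n g(S_n)\big| =: \sqrt{k_N}\tilde{B}_p^{\,\mathrm{true}}$. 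The confidence interval uses the same functional with $\hat{c}$ in place of $c$, i.e.\ $\tilde{B}_p$. Because the supremum is a Lipschitz function of its coefficient vector with constant governed by the largest target-to-neighbor distance $\epsilon_N = \max_m \max_{n:\Psi^{N,k_N}_{mn}>0} d_{\spatialdomain}(\Sstar_m, S_n)$, I would bound $\sqrt{k_N}|\tilde{B}_p^{\,\mathrm{true}} - \tilde{B}_p| \le c'\,\sqrt{k_N}\,\epsilon_N\,\|\hat{c} - c\|$ for a constant $c'$. Here the adaptive rule with $a_t = 1/\sqrt{t}$ is doing the essential work: it forces $\epsilon_N = O(1/\sqrt{k_N})$, so the true scaled bias $b_N$ stays $O(1)$ rather than vanishing (which is exactly why the interval must carry a bias term at all), while making $\sqrt{k_N}\epsilon_N = O(1)$; combined with $\|\hat{c} - c\| = o_p(1)$ this gives $\sqrt{k_N}|\tilde{B}_p^{\,\mathrm{true}} - \tilde{B}_p| = o_p(1)$.

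Finally I would combine. Writing $\Delta_N = \sqrt{k_N}(\tilde{B}_p^{\,\mathrm{true}} - \tilde{B}_p) = o_p(1)$ and using $\sqrt{k_N}\tilde{B}_p^{\,\mathrm{true}} \ge |b_N|$, the coverage event contains $\{|Z_N + R_N| \le z_{\alpha/2}\sqrt{k_N}\hat{\sigma}_p - |\Delta_N|\}$, since on that event the triangle inequality gives $|Z_N - b_N + R_N| \le |Z_N + R_N| + |b_N| \le z_{\alpha/2}\sqrt{k_N}\hat{\sigma}_p + \sqrt{k_N}\tilde{B}_p$. As $N \to \infty$, $z_{\alpha/2}\sqrt{k_N}\hat{\sigma}_p - |\Delta_N| \to z_{\alpha/2}\sqrt{V_p}$, $R_N \to 0$ in probability, and $Z_N \to \mathcal{N}(0,V_p)$ in distribution, so Slutsky gives $\PP(|Z_N+R_N| \le z_{\alpha/2}\sqrt{k_N}\hat{\sigma}_p - |\Delta_N|) \to \PP(|\mathcal{N}(0,V_p)| \le z_{\alpha/2}\sqrt{V_p}) = 1-\alpha$, whence $\liminf_N \PP(\betamle_p \in I^{\alpha}_{p,N}) \ge 1-\alpha$. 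The crux, and the step I expect to be most delicate, is (ii): because the bias does not vanish on the $\sqrt{k_N}$ scale, I cannot simply discard it, and I must show the interval's plug-in bias term $\tilde{B}_p$ still dominates the true bias despite estimating the Jacobian, which is precisely where the $a_t = 1/\sqrt{t}$ rate couples the neighbor distance to the estimation error. A secondary subtlety is that all $o_p(1)$ slack must be shown to be absorbed by the variance term, which is legitimate only because $V_p > 0$.
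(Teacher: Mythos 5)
Your proposal is correct and follows essentially the same route as the paper's proof: a Lyapunov CLT for $\sqrt{k_N}\Psi^{N,k_N}(Y-\EE[Y|S])$ plus the delta method for asymptotic normality, consistency of $k_N\Psi^{N,k_N}\Lambda^{N}(\Psi^{N,k_N})^{\transpose}$ and of the plug-in Jacobian, the Lipschitz/Wasserstein bound on the $O(1)$ bias (with $a_t = 1/\sqrt{t}$ forcing the neighbor radius to be $O(1/\sqrt{k_N})$), and Slutsky to assemble the conservative coverage statement. The only cosmetic difference is that the paper defines the intermediate bias $\hat{\mu}$ with the estimated Jacobian so that $|\hat{\mu}|\leq \tilde{B}_p$ holds deterministically, whereas you bound the true-Jacobian bias and separately control the $o_p(1)$ gap to the plug-in Wasserstein functional via its Lipschitz dependence on the coefficient vector; both close the argument.
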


\section{Experiments}\label{sec:experiments}
\begin{figure}
\centering
\includegraphics[width=\textwidth]{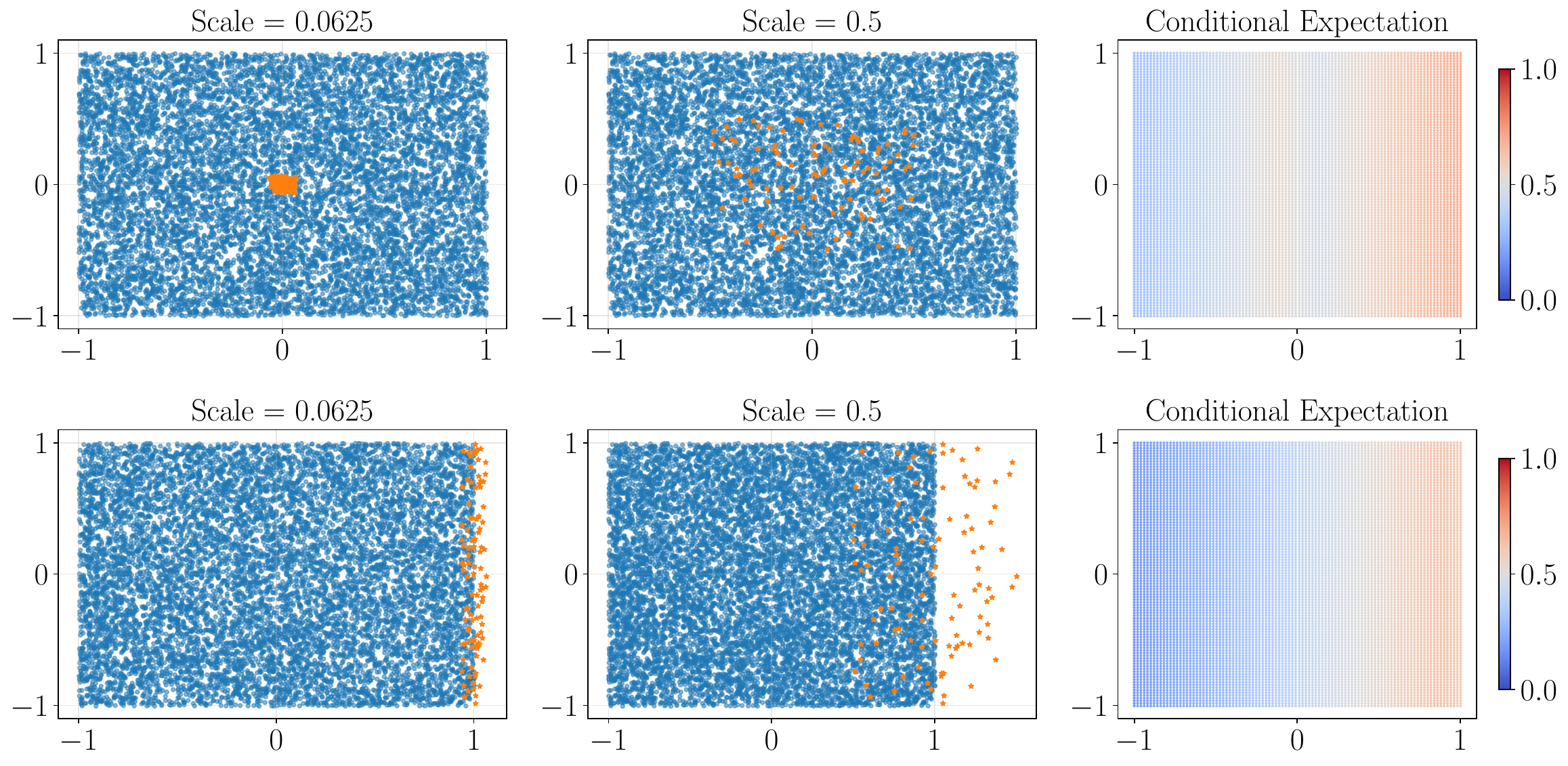}
\caption{We summarize the data generating processes for the first (top) and second (bottom) simulation study. The left two plots show the distribution of train (blue) and target (orange) locations. The third panel shows the (unobserved) expected response surface.}\label{fig:logistic-simulation-data}
\end{figure}

In this section, we present two simulation studies to evaluate the performance of the proposed method for logistic regression. Throughout, we consider three baselines: logistic regression, logistic regression using the sandwich covariance estimator \citep{huber_behavior_1967}, and weighted logistic regression using kernel density estimation 
\citep{shimodaira_improving_2000}. While logistic regression is a classic method, confidence intervals from logistic regression are widely used in scientific applications (e.g.~\citealp{lee2025trap,Zhang:2023aa,ahn2024airpollution}). We give more detail on baseline methods in \cref{app:baselines}.

\textbf{Evaluation Metrics.}
We evaluate methods along four complementary dimensions. Our primary focus is on empirical coverage and the proportion of false positives, since failure on either dimension undermines the reliability of statistical conclusions. Empirical coverage measures the proportion of confidence intervals that contain the true parameter value; we regard a method as successful if its coverage is at or above the nominal level of $0.95$. The proportion of false positives measures the frequency with which a confidence interval excludes $0$ but assigns the wrong sign to the parameter; this rate should remain close to or below the nominal level of $0.05$. Conditional on reliability, we then assess whether methods provide informative conclusions. Two metrics capture this aspect: the average width of confidence intervals, which should be as small as possible given adequate coverage, and the proportion of true positives, defined as the fraction of intervals excluding $0$ with the correct sign, which should be as high as possible. Narrow intervals and a high rate of true positives indicate that a method can identify associations precisely and with confidence.

These metrics illustrate the balance between validity and informativeness. A method that always returns a degenerate interval of width zero (a single point) would appear confident whenever it guesses the correct sign, yet would completely fail to reflect uncertainty. Conversely, a method that always returns the entire real line would achieve perfect coverage and no false positives, but would provide no useful scientific guidance. We therefore regard a method as successful if it achieves coverage near the nominal rate, maintains a low false positive proportion, and produces intervals that are narrow enough to support meaningful conclusions --- for example, correctly and confidently identifying the direction of association.

\begin{figure}
    \centering\includegraphics[width=\linewidth]{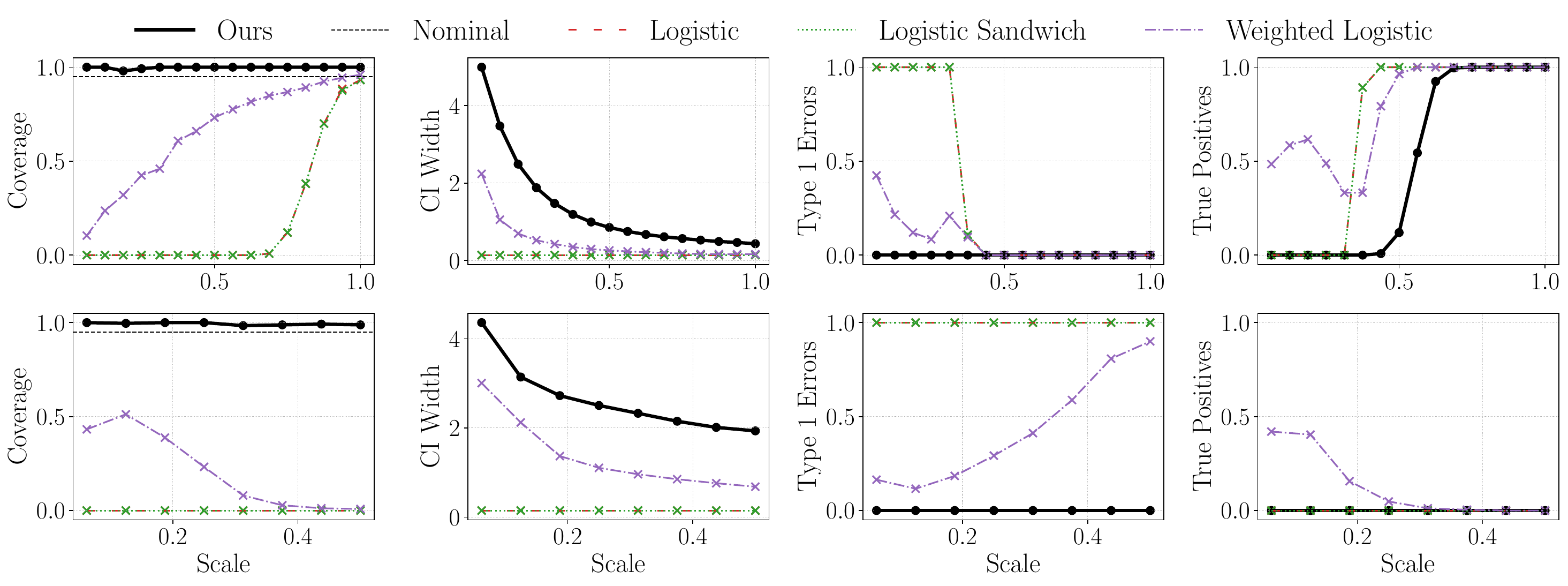}
    \caption{From left to right, coverage average confidence interval width, proportion of false positives and proportion of true positives for each method on the first simulation (top) and the second simulation (bottom). Coverage should be above the nominal level (dashed line in first column), and the proportion of false positives should be below $0.05$. Given these properties, we would like confidence intervals that are as narrow as possible, and return many true positives.}
    \label{fig:logistic-simulation-results}
\end{figure}

\textbf{Data-Generating Process.} In both simulations, we simulate 250 datasets according to data-generating processes described in detail in \cref{app:simulation-dgps} and illustrated in \cref{fig:logistic-simulation-data}. The two simulations are intended to highlight contrasting regimes: the first one reflects a setting where the infill asymptotics assumption is reasonable, whereas for the second one extrapolation is unavoidable. In the latter case, we anticipate wider confidence intervals, reflecting the inherent difficulty of the task. For our method, we set the Lipschitz constant of the conditional expectation function to its true value, $L = 0.25$, in both simulations. 


In each experiment, we draw $10000$ training locations uniformly from $[-1,1]^2$. The target locations are then constructed differently across the two designs. In the first experiment, targets are concentrated within a subset of the square, determined by a scale parameter, so that the infill property holds. In the second experiment, targets are concentrated but shifted outside the main support of the training set, to the right of the square, thereby requiring extrapolation. The two left panels of \cref{fig:logistic-simulation-data} depict the distribution of training and target locations for the infill (top) and extrapolation (bottom) settings. In both experiments, we use a single covariate equal to the first coordinate of the spatial location. Responses are generated from a Bernoulli distribution whose conditional expectation varies smoothly with space. The rightmost panel of \cref{fig:logistic-simulation-data} displays this conditional expectation for both designs, with the precise mathematical forms given in \cref{app:simulation_details}.


\textbf{Results.} We summarize the results across the two simulations in \cref{fig:logistic-simulation-results}. Our method consistently achieves coverage at or above the nominal $0.95$ level and does not produce false positives. By contrast, the baseline methods frequently fall far short of nominal coverage: in the second simulation, all baselines achieve zero coverage for certain instances. This failure is accompanied by high rates of false positives, meaning the baselines often return intervals that confidently --- but incorrectly --- assign the wrong sign to the association.


The strength of our method lies in its reliability: it avoids misleading conclusions even in challenging extrapolation regimes. The cost of this conservativeness is wider confidence intervals and, consequently, a smaller proportion of true positives compared to the baselines. This trade-off is expected, as our method protects against worst-case bias rather than optimizing for power. Improvements in power may be possible, but in scenarios dominated by extrapolation, additional assumptions would be needed to confidently and correctly make inference about the direction of an association.



\section{Discussion}\label{sec:discussion}
In this work, we developed a new framework for inference on associations in generalized linear models under spatial misspecification and covariate shift. Through theory and simulations, we show that our estimator is consistent under infill asymptotics and that our intervals achieve valid coverage, unlike existing approaches which often fail dramatically. Our method is conservative, avoiding false positives even in challenging extrapolation settings. Looking ahead, we are particularly interested in applying our method to real datasets in scientific domains such as environmental monitoring, epidemiology, and climate science, where robust and reliable inference on spatial associations is critical.

\section*{Acknowledgements}
The authors thank Stephen Bates for helpful discussions during the early stages of this work. This work was supported in part by a Social and Ethical Responsibilities of Computing (SERC) seed grant, an Office of Naval Research Early Career Grant, Generali, a Microsoft Trustworthy AI Grant, and NSF grant 2214177.

\bibliography{reference}

\appendix
\section{Interpretation of Target Maximum Likelihood}\label{app:target-ml-minimizes-kl}

In this section, we show that \cref{eqn:pop-max-likelihood} minimizes the conditional KL divergence from the true data-generating process over the model class, when the target locations are distributed according to the discrete measure that assigns equal weight to each target location. This follows the standard argument that maximum likelihood minimizes a KL divergence, but we reconstruct the argument to emphasize that in our setting it is conditional on the target locations.

\begin{prop}
    Suppose \cref{assum:cov-fixed-fns,assum:test-train-dgp,assum:beta-mle-exists-strictly-concave}.
    Let $P^{\star}$ denote the joint measure of spatial locations, covariates and responses, with the measure over spatial locations fixed to equal the discrete measure that assigns equal weight to each target location. For $\beta \in \RR^{P}$, define $P^{\beta}$ to be the measure over spatial locations fixed to equal the discrete measure that assigns equal weight to each target location, the covariates equal to $\chi(S)$, and the response generated with conditional log likelihood of the response equal to \cref{eqn:glm-log-likelihood}. Suppose there exists a $\beta \in \RR^{P}$ such that $\mathrm{KL}(P^{\star}, P^{\beta}) \leq \infty$. Then,
        $
        \betamle 
        = \arg\min_{\beta \in \RR^{P}} \mathrm{KL}(P^{\star}, P^{\beta}).
        $
\end{prop}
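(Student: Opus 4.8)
The plan is to carry out the classical argument that maximizing likelihood coincides with minimizing a (forward) KL divergence, taking care that here the minimization is conditioned on the fixed, discrete distribution over target locations and that the exponential-family base measure does not interfere.

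First I would note that, by construction, $P^{\star}$ and $P^{\beta}$ have identical marginals over the pair $(S,X)$: the spatial marginal is the discrete measure placing mass $1/M$ on each $\Sstar_m$ for both, and under \cref{assum:cov-fixed-fns} the covariate is the deterministic function $X=\chi(S)$, so the joint law of $(S,X)$ coincides. The two measures differ only in the conditional law of the response given location. Because the $(S,X)$-marginals agree, the chain rule for KL divergence reduces the joint divergence to an average of conditional divergences over the target sites,
\[
  \mathrm{KL}(P^{\star},P^{\beta})
  = \frac{1}{M}\sum_{m=1}^M \mathrm{KL}\!\left(P^{\star}_{Y\mid \Sstar_m},\, P^{\beta}_{Y\mid \Sstar_m}\right).
\]

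Next I would expand each conditional divergence as a (location-dependent) entropy term minus a cross-entropy term. Writing $p^{\star}_m$ for the true conditional mass/density of $\Ystar_m$ given $\Sstar_m$ against the base measure defining $h$, and recalling that $P^{\beta}_{Y\mid\Sstar_m}$ has mass/density $h(\cdot;\Xstart_m\beta)$, each term equals $\EE[\log p^{\star}_m(\Ystar_m)\mid\Sstar_m] - \EE[\log h(\Ystar_m;\Xstart_m\beta)\mid\Sstar_m]$. The entropy term does not depend on $\beta$. Substituting the exponential-family form $\log h(y;\theta)=\log c(y)+\theta y-\kappa(\theta)$ from \cref{eqn:glm-log-likelihood} shows that the base-measure contribution $\EE[\log c(\Ystar_m)\mid\Sstar_m]$ is likewise $\beta$-independent. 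Summing over $m$ and dropping all $\beta$-free constants leaves exactly the objective in \cref{eqn:pop-max-likelihood}, so that
\[
  \arg\min_{\beta}\mathrm{KL}(P^{\star},P^{\beta})
  = \arg\max_{\beta}\sum_{m=1}^M \EE\!\left[\log h(\Ystar_m;\Xstart_m\beta)\mid\Sstar_m\right]
  = \betamle,
\]
with uniqueness supplied by \cref{assum:beta-mle-exists-strictly-concave}.

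The step I expect to be the main obstacle is making the entropy/cross-entropy split rigorous, since the decomposition is formally $\infty-\infty$ if the true conditional response has infinite entropy (a genuine concern for count or continuous responses, though not for bounded ones). To sidestep this I would never isolate the entropy term: instead, for two parameters $\beta_0,\beta_1$ with $\beta_0$ the one guaranteed to have finite divergence, I would write the difference $\mathrm{KL}(P^{\star},P^{\beta_1})-\mathrm{KL}(P^{\star},P^{\beta_0})$ and combine the two logarithms before integrating. Because the base measure $\log c(y)$ cancels in the difference, the integrand reduces to $(\Xstart_m\beta_0-\Xstart_m\beta_1)\Ystar_m$ plus $\beta$-free constants, whose conditional expectation is finite since $\EE[\Ystar_m\mid\Sstar_m]=f(\Sstar_m)$ is finite by \cref{assum:test-train-dgp}. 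Hence the difference of divergences is always well-defined and finite, equals the negative difference of the expected log-likelihoods, and its minimization recovers $\betamle$ exactly as above --- now with no appeal to finiteness of the entropy.
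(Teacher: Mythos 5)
Your proposal is correct and follows essentially the same route as the paper's proof: both reduce the joint KL to an average of conditional divergences via the chain rule (using that the $(S,X)$-marginals coincide by construction) and then observe that the entropy and base-measure terms are $\beta$-free, leaving exactly the objective in \cref{eqn:pop-max-likelihood}. Your extra step of working with differences $\mathrm{KL}(P^{\star},P^{\beta_1})-\mathrm{KL}(P^{\star},P^{\beta_0})$ to avoid an $\infty-\infty$ when the true conditional entropy is infinite is a genuine refinement that the paper's proof does not make --- the paper simply absorbs the entropy into a constant $C$ and restricts attention to the set $\Omega$ of finite-divergence parameters.
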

\begin{proof}
Let $\Omega = \{\beta \in \RR^{P}: \mathrm{KL}(P^{\star}, P^{\beta}) < \infty\}$. $\Omega$ is non-empty by assumption. And the minimizer of $\mathrm{KL}(P^{\star}, P^{\beta})$ must occur in $\beta$ as this KL divergence is infinite outside of $\Omega$ by definition. Let $P^{\star}_{\Ystar_m |\Sstar_m}$ denote the conditional distribution of $\Ystar_m$ given $\Sstar_m$ under the data generating process, and $P^{\beta}_{\Ystar_m|\Sstar_m}$ denote the conditional distribution of  $\Ystar_m$ given $\Sstar_m$ under the generalized linear model with parameter $\beta$. For any $\beta \in \Omega,$  and using the chain rule of KL divergence \citep[Theorem 2.5.3]{cover2006elements}, and because the measure of $P^{\beta}$ and $P^{\star}$ over the locations and covariates is the same by construction, 
\newcommand{\diffd}{\textup{d}}
\begin{align}
    \mathrm{KL}(P^{\star}, P^{\beta}) &= \frac{1}{M}\sum_{m=1}^M \int \log \frac{\diffd{}P^{\star}_{\Ystar_m |\Sstar_m}}{\diffd{}P^{\beta}_{\Ystar_m|\Sstar_m}}\diffd{}P^{\star}_{\Ystar_m |\Sstar_m} \\
    & = \frac{1}{M}\sum_{m=1}^M \EE[-\log h (\Ystar_m; \Xstart_m\beta)|\Sstar_m] + C,
\end{align}
where $C$ is the entropy (for discrete $Y$) or differential entropy (for continuous $Y$). Minimizing over $\beta$ 
\begin{align}
    \arg\min_{\beta \in \RR^{P}} \mathrm{KL}(P^{\star}, P^{\beta}) =  \arg\min_{\beta \in \Omega} \mathrm{KL}(P^{\star}, P^{\beta}) = \arg\max_{\beta \in \RR^{P}}\sum_{m=1}^M \EE[\log h (\Ystar_m; \Xstart_m\beta)|\Sstar_m],
\end{align}

The right hand side is the same as \cref{eqn:pop-max-likelihood}, and so $\betamle$ minimizes a KL divergence to the true data generating process, conditional on the target locations.
\end{proof}

\section{Alternative Approaches for  Confidence Intervals for Well-Specified Generalized Linear Models}\label{app:alternative-cis}
\textbf{Confidence Intervals Based on Asymptotic Normality.} A standard approach for constructing confidence intervals that are valid for large sample sizes follows from the general theory of asymptotic normality of maximum likelihood estimators (MLEs) \citet{cramer1946mathematical,wald1949consistency}. Informally, if $\hat{\beta}_n$ is the MLE of $\beta^{\star}$ based on $n$ samples, then under well-specification,
    $\sqrt{n}(\beta^{\star} - \hat{\beta}_n) \approx \mathcal{N}(0, I_{\beta^{\star}}^{-1})
    $
where $I_{\beta^{\star}}$ is the Fisher information matrix. In practice, $I_{\beta^{\star}}$ can be estimated using the observed Fisher information matrix, $(\hat{I}_{\beta, n})_{i,j} = \sum_{n=1}^N \frac{\partial^2\ell_n(\beta, Y_n)}{\partial \beta_i \partial \beta_{j}}$, where $\ell_n(\beta; Y_n) = C_n(Y_n) + X^{\transpose}_n\beta Y_n - \kappa(X^{\transpose}_n\beta)$ is the log-likelihood of a single data point.
An asymptotic confidence interval for the $p$th coefficient $\beta_p^\star$ then takes the form: $\beta_p^{\star} \in \hat{\beta}_p \pm z_{1-\alpha/2} \tilde{\sigma}^2_p,$
where $\tilde{\sigma}^2_p$ is the $p$th diagonal entry of $\hat{I}_{\beta, n}^{-1}$, and $z_{1-\alpha/2}$ is the $(1-\alpha/2)$-quantile of the standard normal distribution. Even if the model is misspecified, maximum likelihood leads to an asymptotically normal estimator when the data remain i.i.d., though the variance is no longer governed by the Fisher information. In this case, confidence intervals are obtained using a sandwich variance estimator \citep{white1982misspecified}. A detailed treatment of these asymptotics can be found in \citet[Chapter 4]{Vaart_1998}. We provide further discussion of alternative confidence interval constructions for well-specified GLMs in \cref{app:alternative-cis}.

\textbf{Alternative Approaches for Confidence Intervals in GLMs.}
While the asymptotic approximation based on the observed Fisher information, described in \cref{sec:setup}, is widely used,
there are other approaches exist for constructing confidence intervals for well-specified generalized linear models.

For logistic regression 
\cite[Chapter 2]{cox_1989_binary} describes how to construct confidence intervals that are exact in finite samples.
These exact methods are typically more computationally intensive, but can be used to construct confidence intervals that are valid even for small sample sizes. 

\Citet{venzon_1988_profile} use the asymptotic $\chi^2$ distribution of the profile log likelihood to construct asymptotic confidence intervals. The extent to which our methods can be adapted to these approaches is an interesting question for future work.

\section{Inconsistency of Point Estimation for Existing Methods}\label{app:point-estimation-ce}
In this section, we provide additional details proving the claims in \cref{ce:point-estimation}. We first state the counterexample.

\begin{counterexample}[Several Existing Methods are Not Consistent Under Infill Asymptotics for Homoskedastic Linear Models with Gaussian Noise]\label{ce:point-estimation}
    Assume \cref{assum:lipschitz,assum:test-train-dgp} with spatial domain $[-0.75,1]$, two target locations $\Sstar_m = \pm 0.5$, $f(S) = S^2$ and $\chi(S)=S$. Suppose responses follow $\Ystar =f(\Sstar) + \epsilon$, $\epsilon \sim \mathcal{N}(0,1)$. Consider least squares linear regression fit without an intercept. Then \cref{assum:bounded-4th-moment,assum:bounded-variance,assum:variance-continuous} hold, as does \cref{assum:beta-mle-exists-strictly-concave} with $\betamle=0$. Further, if the training data are uniformly distributed on $[-0.75,1]$, then infill asymptotics holds almost surely. However, neither the estimator proposed in \citet{burt2025lipschitz} nor the ordinary least square estimator based on the training data converge to $0$ in probability.
\end{counterexample}

The first claim we show is that \cref{assum:bounded-variance,assum:variance-continuous,assum:bounded-4th-moment} and \cref{assum:beta-mle-exists-strictly-concave} hold, with $\betamle=0$. First, $\rho^2(S) = \Var(\epsilon) = 1$, and so \cref{assum:variance-continuous,assum:bounded-variance} hold. Next, the conditional 4th moment is again a constant function of space that is equal to the 4th moment of $\mathcal{N}(0,1)$, which is $3$, and is therefore bounded so \cref{assum:bounded-4th-moment} holds. Finally, the log likelihood is
\begin{align}
    \ell(\beta) = C + \frac{1}{2}\EE[-(0.25 + \epsilon_1 + 0.5 \beta)^2 - (0.25 + \epsilon_2 - 0.5 \beta)^2] 
\end{align}
Taking derivatives
\begin{align}
    \ell'(\beta) = \frac{1}{2}\EE[-(0.25 + \epsilon_1 + 0.5 \beta) + (0.25 + \epsilon_2 - 0.5 \beta)] = -0.25\beta, \ell''(\beta) = -0.25.
\end{align}
This is (globally) concave by the 2nd derivative test, and has a unique maximum at the solution of $\ell'(\beta) = 0$, which is $\beta=0$.

Our remaining claim is that OLS and the nearest-neighbor method with a single neighbor approach considered in \citet{burt2025lipschitz} are not consistent. The ordinary least squares estimate converges to the solution of the training normal equations,
\begin{align}
    \EE[x^2]^{-1}\EE[xy] =\EE[x^2]^{-1}\EE[x^3] \neq 0,  
\end{align}
where we used that because the distribution of $X$ is not symmetric about $0$, $\EE[x^3] \neq 0$.

To show that estimator in \citet{burt2025lipschitz} is not consistent, we show its variance does not converge to $0$. Because the distribution of $\Sstar$ is absolutely continuous with respect to Lebesgue measure, with probability $1$, for every $N$,  there is a single training location closest to $\Sstar_1$ and a single training location closest to $\Sstar_2$. For all $N$, the variance of the estimator in \citet{burt2025lipschitz} is then $(0.5^2) * 1 = 0.25$, which does not converge to $0$. We conclude this estimator is also not consistent.

We conjecture that the consistency of importance weighted approaches  depends on continuity of the covariates as a function of space and selection of the bandwidth parameter. We expect that the bandwidth parameter would have to be selected in an adaptive way for consistency to hold.
\section{Proof of Consistency of Point Estimation for our Method}\label{app:proof_point_estimation}
In this section, we prove \cref{thm:point-estimate-consistent-infill}, which shows that our point estimate is consistent under infill asymptotics. We first state a complete version of \cref{thm:point-estimate-consistent-infill} that includes an explicit definition for the adaptive choice of neighbors.

\begin{theorem}\label{thm:point-estimate-consistent-infill-complete}
     Fix any $M \in \mathbb{N}$ and $(\Sstar_m)_{m=1}^M$. Let $(S_n)_{n=1}^{\infty}$ be a sequence of points in $\spatialdomain$ such that infill asymptotics holds with respect to $(\Sstar_m)_{m=1}^M$. Suppose \cref{assum:cov-fixed-fns,assum:test-train-dgp,assum:beta-mle-exists-strictly-concave,assum:lipschitz,assum:bounded-variance}.
    Choose any positive sequence $(a_t)_{t=1}^\infty$ that tends to $0$. Define the sequence $k_{N}$ recursively by, $k_1 = 1$ and 
    \begin{align}
        k_{N+1}\!=\!
            \begin{cases}
                k_{N}+1 &  \max\limits_{\substack{1 \leq m \leq M \\ 1 \leq n \leq N+1}} 1\{S_n \textup{\,is a\,} k_{N}+1 \textup{\,nearest-neighbor of\,} \Sstar_m \in S_{1:N+1}\} d(\Sstar_{m}, S_n) \leq a_{k_N}\\
                k_{N} & \text{otherwise.}
            \end{cases}
    \end{align}
    Then $\pointestimate \to \betamle$, where convergence is in distribution.  
\end{theorem}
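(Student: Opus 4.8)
The plan is to factor the estimator through the map $\tau$: since $\betamle = \tau(\EE[\Ystar|\Sstar])$ and $\pointestimate = \tau(\hat A_N)$ with $\hat A_N := \Psi^{N,k_N}Y$, it suffices to establish (i) $\hat A_N \to \EE[\Ystar|\Sstar]$ in probability (as vectors in $\RR^M$), and (ii) continuity of $\tau$ at $\EE[\Ystar|\Sstar]$; the continuous mapping theorem then delivers $\pointestimate \to \betamle$. Because $M$ is fixed, it is enough to prove coordinatewise convergence of $\hat A_N$, which I would do with a bias--variance split followed by Chebyshev's inequality.

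The crux is analyzing the adaptive rule, and this is where the two requirements on $k_N$ pull in opposite directions: driving the averaging variance to zero needs $k_N \to \infty$, whereas keeping the averaging bias small needs every selected neighbor to stay close to its target. First I would show $k_N \to \infty$. This is deterministic given the fixed location sequence (tie-breaking does not alter $k$-th-neighbor distances): fix any $k$; by \cref{def:infill-asymptotics} infinitely many training points accumulate within distance $a_k$ of each $\Sstar_m$, so past some finite index every one of the $M$ targets has at least $k+1$ neighbors within $a_k$, at which point the incrementing condition of \cref{thm:point-estimate-consistent-infill-complete} fires. For the variance, independence of the responses (\cref{assum:test-train-dgp}) and the uniform bound $\rho^2 \le B_Y$ (\cref{assum:bounded-variance}) give $\Var[(\hat A_N)_m] = k_N^{-2}\sum_{n}\rho^2(S_n) \le B_Y/k_N \to 0$. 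For the bias, I would use that at the most recent increment to the current value $k_N$ the rule guaranteed all selected neighbors lay within $a_{k_N-1}$ of the target, and that adding further training points only shrinks neighbor distances; with $L$-Lipschitzness (\cref{assum:lipschitz}) this yields $|\EE[(\hat A_N)_m] - f(\Sstar_m)| \le L\,a_{k_N-1} \to 0$ as $k_N \to \infty$. Chebyshev's inequality then gives (i).

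For (ii) I would apply the implicit function theorem to the stationarity condition defining $\tau$, namely $\sum_{m}\Xstar_m\big(A_m - \kappa'(\Xstart_m\beta)\big) = 0$. Its Jacobian in $\beta$ is $-\sum_m \kappa''(\Xstart_m\beta)\,\Xstar_m\Xstart_m$, the Hessian of the population log-likelihood, which is negative definite at $(\betamle,\EE[\Ystar|\Sstar])$ by \cref{assum:beta-mle-exists-strictly-concave}; this produces a $C^1$ local solution map sending $\EE[\Ystar|\Sstar]$ to $\betamle$. The main technical obstacle is that \cref{assum:beta-mle-exists-strictly-concave} supplies only \emph{local} strict concavity, so I must check that for $A$ near $\EE[\Ystar|\Sstar]$ the global $\arg\max$ defining $\tau(A)$ coincides with this local solution (a local-uniqueness argmax argument), and that $\hat A_N$ enters this neighborhood with probability tending to one --- both of which follow from step (i). Concluding, continuity of $\tau$ at $\EE[\Ystar|\Sstar]$ together with the continuous mapping theorem give $\pointestimate \to \betamle$ in probability, equivalently in distribution since the limit is constant.
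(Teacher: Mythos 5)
Your proposal is correct and takes essentially the same route as the paper's proof: establish that the adaptive rule forces $k_N \to \infty$ while keeping all selected neighbor distances below $a_{k_N-1}$, deduce $\Psi^{N,k_N}Y \to \EE[\Ystar|\Sstar]$ via a bias--variance split with Chebyshev, prove local $C^1$ regularity of $\tau$ at $\EE[\Ystar|\Sstar]$ by the implicit function theorem under \cref{assum:beta-mle-exists-strictly-concave}, and conclude with the continuous mapping theorem. The only cosmetic difference is that you bound the bias by the maximum neighbor distance directly, whereas the paper first passes to the average neighbor distance; both your identification of the local-vs-global argmax issue and your handling of it mirror the paper's treatment.
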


We first show that the sequence of number of neighbors $(k_N)_{N=1}^{\infty}$ has two desirable properties. First, it tends to infinity. Second, the maximum distance of the $k_N$ nearest-neighbors to each target in location tends to $0$ as $N$ increases. The first property is needed for the variance of our estimate to tend to $0$, and the second property is ensures that the bias in our point estimate goes to $0$ as $N$ increases.

\begin{prop}\label{prop:kn-sequence-props}
    Fix any $M \in \mathbb{N}$ and $(\Sstar_m)_{m=1}^M$. Let $(S_n)_{n=1}^{\infty}$ be a sequence of points in $\spatialdomain$. Then if $(S_n)_{n=1}^{\infty}$ satisfies infill asymptotics with respect to $(\Sstar_m)_{m=1}^M$. Choose $(a_t)_{t=1}^\infty$ to be any positive sequence tending to $0$. Define the sequence $(k_N)_{N=1}^\infty$ by $k_1 = 1$ and
    \begin{align}
        k_{N+1}=
            \begin{cases}
                k_{N}+1 &  R_{N+1, k_N + 1} \leq a_{k_N}\\
                k_{N} & \text{otherwise.}
            \end{cases}
    \end{align}
    with $R_{N, t} = \max\limits_{1 \leq m \leq M}\max\limits_{1 \leq n \leq N} 1\{S_n \text{\,is a\,} t \text{\,nearest-neighbor of\,} \Sstar_m\} d(\Sstar_{m}, S_n)$
    Then the following two properties hold:
    \begin{enumerate}
        \item $\lim_{N\to \infty} k_N = \infty$; and
        \item $\lim_{N \to \infty} R_{N, k_N} = 0$.
    \end{enumerate}
\end{prop}

\begin{proof}
    We first show that the sequence $(k_N)_{N=1}^\infty$ is unbounded. Because it is monotone increasing, this implies property 1.
    
    Towards contradiction, suppose there exists a least upper bound $K$ such that $k_N \leq K$ for all $N$.  Because the $k_N$, we can find a $K$ such that $k_N = K$ for some $N$, and $K$. Because $k_N$ is monotone increasing, it must be the case that for all $N \geq N_0$, $k_N = K$. Therefore, we must have that for all $N \geq N_0$,
    \begin{align}
        R_{N+1,K+1} >  a_{K} > 0.
    \end{align}
    Otherwise, there would exist an $N'$ such that $k_{N'+1}= K+1$ (by condition 1 in the definition of $k_{N+1}$, contradicting that $K$ is an upper bound on $(k_N)_{n=1}^\infty$.
    we would have $k_{N'+1} = k_{N'} + 1 = K+1$.
    
    We now show that there exists a $\tilde{N}$ such that for all $N \geq \tilde{N}, R_{N,K} \leq a_K$ leading to a contradiction.
    Because infill asymptotics holds, for $1\leq m \leq M$, there exists a $N_{a_K, m, K}$ such that for all $N \geq N_{a_K, m, K}$, there exists at least $K$ training locations in $B(\Sstar_m, a_K)$. Define $\tilde{N} = \max_{1 \leq m \leq M} N_{a_K, m, K}$. Then for all $N \geq \tilde{N}$
    \begin{align}
         \max_{1 \leq m \leq M}\max_{1 \leq m \leq N} 1\{S_n \text{\, is a \,} K \text{\,nearest-neighbor of\,} \Sstar_m\}d(\Sstar_{m}, S_n) \leq a_K,
    \end{align}
    because the $K$ nearest-neighbors of $\Sstar_m$ are all contained in $B(\Sstar_m, a_K)$ for each $1 \leq m \leq M$. This is a contradiction, leading to the conclusion that no upper bound on $(k_N)_{N=1}^\infty$ exists, and therefore property 1 holds.

    It remains to show that property 2 holds.  The sequence $(R_{N, k_N})_{N=1}^\infty$ only (possibly) increases between pairs $N, N+1$ such that $k_{N+1} = k_{N}+1$.
    
    For such $N$,
    $ R_{N+1, k_{N+1}} \leq a_{k_N}.$
    For any $N$ such that $k_N \geq 2$,
    \begin{align}
        R_{N+1, k_{N+1}} \leq \max(R_{N, k_N}, a_{k_N}).
    \end{align}
    Applying the previous equation to its own right hand side, for any $N$ such that $k_{N-1} \geq 2$, 
    \begin{align}
        R_{N+1, k_{N+1}} \leq \max(a_{k_N - 1}, a_{k_N}). 
    \end{align}
    Because $(a_t)_{t=1}^\infty$ tends to $0$ and $k_N \to \infty$, $\lim_{N\to\infty} \max(a_{k_N - 1}, a_{k_N}) = 0$. Therefore, $R_{N, k_N}$ is a non-negative sequence bounded above by a sequence tending to $0$, and so $\lim_{N \to \infty} R_{N, k_N} = 0$.
\end{proof}
We now show that the second condition implies the weaker condition that the average distance of the $k_N$ nearest-neighbors to each target location tends to $0$ as $N$ increases. This is a useful condition because it implies that the bias in our point estimate goes to $0$ as $N$ increases.
\begin{prop}\label{cor:average-distance-to-targets-tends-to-zero}
   Let $(k_N)_{N=1}^\infty$ be a sequence of numbers of neighbors such that
    \begin{enumerate}
        \item $\lim_{N\to \infty} k_N = \infty$
        \item $\lim_{N \to \infty} \max_{1 \leq m \leq M} \max_{1 \leq n \leq N} 1\{S_n \text{\,is a\,} k_N \text{\,nearest-neighbor of\,} \Sstar_m\}d(\Sstar_{m}, S_n) = 0$.
    \end{enumerate}
    Then $\lim_{N \to \infty} \max_{1 \leq m \leq M} \frac{1}{k_N}\sum_{n=1}^N 1\{S_n \text{\, is a \,} k_N \text{\,nearest-neighbor of \,}\Sstar_m\}d(\Sstar_{m}, S_n) = 0$.
\end{prop}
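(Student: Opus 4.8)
The plan is to exploit the elementary fact that an average of nonnegative numbers never exceeds their maximum; this reduces the averaged quantity in the conclusion directly to the maximized quantity already controlled by hypothesis~2. To lighten notation, I would write $D_{mn}^N = 1\{S_n \text{ is a } k_N \text{ nearest-neighbor of } \Sstar_m\}\, d(\Sstar_m, S_n)$, so that $D_{mn}^N \geq 0$; the conclusion then asks us to show $\max_{1 \leq m \leq M} \frac{1}{k_N}\sum_{n=1}^N D_{mn}^N \to 0$, while hypothesis~2 states $\max_{1 \leq m \leq M}\max_{1 \leq n \leq N} D_{mn}^N \to 0$. Note that this is an entirely deterministic statement about the sequence of locations, so no probabilistic considerations enter.

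First I would fix a target index $m$. Under the tie-breaking convention of \Cref{def:knn-psi}, exactly $k_N$ of the indicators defining $D_{mn}^N$ are nonzero (and since $k_N$ is built by incrementing from $k_1 = 1$ by at most one per step, we always have $k_N \leq N$, so the $k_N$ neighbors exist). Because each $D_{mn}^N$ is nonnegative, the sum of these $k_N$ terms is at most $k_N \max_{1 \leq n \leq N} D_{mn}^N$, whence
\begin{align}
    \frac{1}{k_N}\sum_{n=1}^N D_{mn}^N \;\leq\; \max_{1 \leq n \leq N} D_{mn}^N.
\end{align}

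Next I would take the maximum over $m$ on both sides. The right-hand side is then exactly the expression appearing in hypothesis~2, which tends to $0$. Since the left-hand side is a maximum of nonnegative quantities it is itself nonnegative, so a squeeze argument gives the claim. I would also remark that hypothesis~1 ($k_N \to \infty$) is not needed for this bound; the inequality holds for every $k_N \geq 1$, and hypothesis~1 instead plays its role elsewhere (in driving the variance of the estimator to zero).

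The main obstacle is bookkeeping rather than mathematics: the only point requiring care is confirming that the sum contains exactly $k_N$ nonzero terms, so that dividing by $k_N$ produces a genuine average that is dominated by the maximum. Once the tie-breaking convention fixes the number of selected neighbors at $k_N$, the result follows immediately from the average-below-maximum inequality.
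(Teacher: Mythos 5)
Your proposal is correct and matches the paper's argument: the paper invokes H\"older's inequality to bound the $k_N$-term average by the maximum over $n$, then takes the maximum over $m$ and squeezes, which is exactly your average-below-maximum reduction. Your added observations (that exactly $k_N$ indicators are nonzero and that hypothesis~1 is not needed here) are accurate but do not change the route.
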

\begin{proof}
   By H\"older's inequality
    \begin{align}
        &\max_{1 \leq m \leq M} \frac{1}{k_N}\sum_{n=1}^N 1\{S_n \text{\, is a \,} k_N \text{\,nearest-neighbor of \,}\Sstar_m\}d(\Sstar_{m}, S_n) \\ &\leq \max_{1 \leq m \leq M} \max_{1 \leq n \leq N} 1\{S_n \text{\,is a\,} k_N \text{\,nearest-neighbor of\,} \Sstar_m\}d(\Sstar_{m}, S_n).
    \end{align}
    The result follows from taking a limit on both sides as $N\to \infty$, using the the left side is nonnegative, and that the right side tends to 0.
\end{proof}

In showing consistency of our point estimate, we rely on the following lemma, which shows that the point estimate $\pointestimate$ is a continuous function of the estimator of the conditional expectation $\Psi^{N, k_N} Y$, on an open neighborhood containing of the conditional expectation $\EE[\Ystar|\Sstar]$.
\begin{lemma}\label{lem:point-estimate-continuous}
    Suppose \cref{assum:beta-mle-exists-strictly-concave,assum:cov-fixed-fns,assum:test-train-dgp,assum:bounded-variance,assum:lipschitz}. Define the map $\tau: \RR^{M} \to \RR^P$ by
    \begin{align}\label{eqn:tau-arg-max-defn}
        \tau(A) = \arg \max_{\beta \in \RR^P} \sum_{m=1}^M \Xstart_m\beta A_m - \kappa(\Xstart_m\beta).
    \end{align}
    Then $\tau$ is well-defined and continuously differentiable on an open neighborhood containing $\EE[\Ystar|\Sstar]$.
\end{lemma}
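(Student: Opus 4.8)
The plan is to apply the implicit function theorem to the first-order stationarity condition that characterizes $\tau$. Write the objective as $G(\beta, A) = \sum_{m=1}^M \Xstart_m\beta\, A_m - \kappa(\Xstart_m\beta)$ and let $\Phi(\beta, A) = \nabla_\beta G(\beta, A) = \sum_{m=1}^M \big(A_m - \kappa'(\Xstart_m\beta)\big)\Xstart_m^{\transpose}$ be its gradient in $\beta$, where $\Xstart_m^{\transpose} \in \RR^P$ is the $m$th target covariate. Under \cref{assum:cov-fixed-fns,assum:test-train-dgp} the target covariates are fixed and $\EE[\Ystar|\Sstar] \in \RR^M$ is a well-defined finite vector (its entries are $f(\Sstar_m)$, which are finite by \cref{assum:bounded-variance}). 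By the identity $\betamle = \tau(\EE[\Ystar|\Sstar])$ noted in \cref{sec:point-estimation}, the pair $(\betamle, \EE[\Ystar|\Sstar])$ is a zero of $\Phi$, i.e.\ $\Phi(\betamle, \EE[\Ystar|\Sstar]) = 0$.

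Next I would check the hypotheses of the implicit function theorem at this zero. Because $\kappa$ is infinitely differentiable and $G$ is linear in $A$, the map $\Phi$ is continuously differentiable jointly in $(\beta, A)$. Its Jacobian in $\beta$ is the Hessian $\nabla^2_\beta G(\beta, A) = -\sum_{m=1}^M \kappa''(\Xstart_m\beta)\,\Xstart_m^{\transpose}\Xstart_m$, which in fact does not depend on $A$. By \cref{assum:beta-mle-exists-strictly-concave} the population log-likelihood is strictly concave on a neighborhood of $\betamle$, so this Hessian is negative definite --- hence invertible --- at $\beta = \betamle$. The implicit function theorem then provides an open neighborhood $U$ of $\EE[\Ystar|\Sstar]$ and a continuously differentiable map $\tilde\tau : U \to \RR^P$ with $\tilde\tau(\EE[\Ystar|\Sstar]) = \betamle$ and $\Phi(\tilde\tau(A), A) = 0$ for all $A \in U$.

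The main obstacle is that the implicit function theorem produces only a stationary point, whereas $\tau$ is defined as the global $\arg\max$; I would close this gap using concavity. For every $A$, the map $\beta \mapsto G(\beta, A)$ is concave, since it is a sum of terms linear in $\beta$ and terms $-\kappa$ composed with a linear map; hence any stationary point is automatically a global maximizer. Moreover, since $\nabla^2_\beta G$ is continuous in $\beta$ and negative definite at $\betamle$, and $\tilde\tau$ is continuous with $\tilde\tau(\EE[\Ystar|\Sstar]) = \betamle$, after possibly shrinking $U$ the Hessian $\nabla^2_\beta G(\tilde\tau(A), A)$ is negative definite for every $A \in U$. A concave function whose Hessian is negative definite at a stationary point has a strict local maximum there, and because the maximizer set of a concave function is convex, a strict local maximizer is the unique global maximizer. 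Therefore $\tilde\tau(A) = \arg\max_\beta G(\beta, A) = \tau(A)$ on $U$, so $\tau$ is well-defined and continuously differentiable on the open neighborhood $U$ of $\EE[\Ystar|\Sstar]$, as claimed.
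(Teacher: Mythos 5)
Your proof is correct and takes essentially the same route as the paper's: both apply the implicit function theorem to the first-order stationarity condition, using the negative-definite Hessian guaranteed by \cref{assum:beta-mle-exists-strictly-concave}, and then upgrade the resulting stationary point to the unique global maximizer via concavity of the objective in $\beta$ (the paper merely routes the IFT through the intermediate variable $C = \Xstart A$ before composing with $A \mapsto \Xstart A$, which is a cosmetic difference). Your explicit argument that a strict local maximizer of a concave function is its unique global maximizer is, if anything, a slightly more careful rendering of the paper's terse second-order-optimality check.
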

\begin{proof}
Define the function $F:\RR^{2P \times P} \to \RR^P$,
$
F(C, \beta) = C - \Xstart \kappa'(\Xstar \beta).
$
The matrix of partial derivatives of $F$ with respect to $\beta$ evaluated at $\beta^{\star}$ is $H_{\star} = \Xstart \Gamma(\Xstart\beta^{\star})^{-1}\Xstar$ where $\Gamma$ maps an element of $\RR^{M}$ to the diagonal matrix with diagonal entries: $\Gamma(a)_{mm} = \kappa''(a_m)$. 

The implicit function theorem \citet[Theorem 3.3.1]{krantz_2013_implicit}, together with \cref{assum:beta-mle-exists-strictly-concave} implies that there exists a (unique) function $\eta$ in an open neighborhood containing $C^{\star}:= \Xstart\EE[\Ystar|\Sstar]$ such that for all $C$ in this open neighborhood $ F(C, \eta(C)) = 0.$. Furthermore, because the log-likelihood is smooth, $\eta$ is continuously differentiable in an open neighborhood containing $C^{\star}$. By construction $\eta(C^{\star}) = \beta^{\star}$. 
    
Define $\tau(A) = \eta(\Xstart A)$ for all $A \in \RR^{M}$. Let $U_{C^{\star}}$ be an open neighborhood containing $C^{\star}$, such that $\eta$ is well-defined, continuously differentiable on $U_{C^{\star}}$ and $F(C, \eta(C)) = 0$ for all $C \in U_{C^{\star}}$. 

The map $\alpha \to \Xstart \alpha$ is continuously differentiable and surjective. Because composition of continuously differentiable functions is continuously differentiable and there exists an open neighborhood $V \subset \RR^{M}$ such that $\Xstart V \subset U_{C^{\star}}$ and so $\tau$ is well-defined and continuously differentiable on an open set containing $\EE[\Ystar|\Sstar]$. 

It remains to show that there is an open neighborhood containing $\EE[\Ystar|\Sstar]$ such that $\tau(A) = \arg \max_{\beta \in \RR^P} \sum_{m=1}^M \Xstart_m\beta A_m - \kappa(\Xstart_m\beta)$. The definition of $\eta$ implies that, $F(C, \eta(C)) = 0$ for all $C$ in an open neighborhood of $C^{\star}$. This in turn implies that for all $A$ in an open neighborhood of $\EE[\Ystar|\Sstar]$, 
\begin{align}
    F(\Xstart A, \eta(\Xstart A)) =F(\Xstart A, \tau(\Xstart A)) = \Xstart A - \Xstart \kappa'(\Xstar 
    \tau(\Xstart A)) = 0.
\end{align}
This is the first order optimality condition for the maximum in \cref{eqn:tau-arg-max-defn}. To check second order optimality, we can inspect the Hessian --- which only depends on $A$ through the value of $\tau(A)$. This is strictly positive definite in $\beta$ for all $A$ in an open neighborhood of $\EE[\Ystar|\Sstar]$, as it is strictly positive definite in a neighborhood of $\beta^{\star}$ by \cref{assum:beta-mle-exists-strictly-concave}, and because we have already shown $\tau$ is continuous.
\end{proof}

The second main ingredient in the proof of \cref{thm:point-estimate-consistent-infill} is the following lemma, which shows that the empirical conditional expectation converges to the true conditional expectation in distribution.
\begin{lemma}\label{lem:empirical-conditional-expectation-converges}
    Suppose \cref{assum:bounded-variance,assum:lipschitz,assum:beta-mle-exists-strictly-concave,assum:cov-fixed-fns,assum:test-train-dgp}. Let $(k_N)_{N=1}^\infty$ be any sequence of numbers of neighbors such that
    \begin{enumerate}
        \item $\lim_{N\to \infty} k_N = \infty$
        \item $\lim_{m \to \infty} \max_{1 \leq m \leq M} \frac{1}{k_N}\sum_{n=1}^N 1\{S_n \text{\, is a \,} k_N \text{\,nearest-neighbor of \,}\Sstar_m\}d(\Sstar_{m}, S_n) \to 0$.
    \end{enumerate}
    Then $\Psi^{N, k_N}Y_N \to \EE[\Ystar|\Sstar]$ in distribution, where $\Psi^{N, k_N}$ is the $k_N$ nearest-neighbor weight matrix defined in \cref{def:knn-psi}.
\end{lemma}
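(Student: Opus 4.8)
The plan is to establish convergence coordinatewise and then combine the $M$ coordinates. Because the limit $\EE[\Ystar\mid\Sstar]$ is a deterministic vector, convergence in distribution is equivalent to convergence in probability; I will therefore show that each coordinate $(\Psi^{N,k_N}Y)_m$ converges in probability to $\EE[\Ystar_m\mid\Sstar_m]$ and then use that coordinatewise convergence in probability to a constant vector implies joint convergence in probability (and hence in distribution). The workhorse is a bias--variance decomposition. Writing $\mathrm{NN}_m$ for the index set of the $k_N$ nearest training locations to $\Sstar_m$, and using \cref{assum:test-train-dgp} to substitute $\EE[\Ystar_m\mid\Sstar_m]=f(\Sstar_m)$ and $\EE[Y_n\mid S_n]=f(S_n)$, I split
\begin{align}
(\Psi^{N,k_N}Y)_m - \EE[\Ystar_m\mid\Sstar_m]
= \frac{1}{k_N}\sum_{n\in\mathrm{NN}_m}\big(f(S_n)-f(\Sstar_m)\big)
+ \frac{1}{k_N}\sum_{n\in\mathrm{NN}_m}\big(Y_n-f(S_n)\big),
\end{align}
into a (essentially deterministic) bias term and a mean-zero noise term.

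For the bias, \cref{assum:lipschitz} gives $|f(S_n)-f(\Sstar_m)|\le L\,d_{\spatialdomain}(\Sstar_m,S_n)$, so the first sum is bounded in absolute value by $\tfrac{L}{k_N}\sum_{n\in\mathrm{NN}_m} d_{\spatialdomain}(\Sstar_m,S_n)$. Taking a maximum over $m$ and applying hypothesis~(2) of the lemma shows the bias tends to $0$ uniformly in $m$. This bound is invariant to the random tie-breaking in \cref{def:knn-psi}, since any two admissible neighbor sets differ only by exchanging training locations at equal distance, leaving the sum of neighbor distances unchanged.

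For the noise, each summand $Y_n-f(S_n)$ has mean zero and, by \cref{assum:bounded-variance}, variance $\rho^2(S_n)\le B_Y$; the summands are independent across $n$ by \cref{assum:test-train-dgp}. Conditioning on the neighbor set $\mathrm{NN}_m$ (to absorb the tie-breaking randomness, which is independent of the responses) and using the law of total variance, the noise term has variance at most $\tfrac{1}{k_N^2}\cdot k_N\cdot B_Y = B_Y/k_N$. By hypothesis~(1), $k_N\to\infty$, so this variance vanishes and Chebyshev's inequality yields convergence of the noise term to $0$ in probability. Combining the bias and noise bounds gives $(\Psi^{N,k_N}Y)_m\to\EE[\Ystar_m\mid\Sstar_m]$ in probability for each $m$, and a union bound over the finitely many coordinates $1\le m\le M$ upgrades this to joint convergence in probability, hence in distribution.

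The argument is a routine bias--variance calculation once hypotheses~(1) and~(2) are in hand, so rather than a deep obstacle the main points requiring care are bookkeeping: (i) ensuring the bias and variance bounds hold uniformly over the finite index set $\{1,\dots,M\}$, and (ii) correctly accounting for the random tie-breaking in the definition of $\Psi^{N,k_N}$. For (ii) the key observations are that the sum of neighbor distances is tie-break-invariant (so the bias bound is deterministic) and that the tie-breaking is independent of the responses (so conditioning on the neighbor set preserves the per-summand variance bound).
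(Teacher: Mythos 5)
Your proposal is correct and follows essentially the same route as the paper's proof: the same bias--variance decomposition, the same Lipschitz bound on the bias controlled by hypothesis~(2), and the same Chebyshev/weak-law argument giving a noise variance of at most $B_Y/k_N$ controlled by hypothesis~(1). Your explicit handling of the tie-breaking randomness and the coordinatewise-to-joint step is slightly more careful than the paper's writeup, but it is the same argument.
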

\begin{proof}
    The proof has two steps. First, we show that the expected value of the estimator converges to $\EE[\Ystar|\Sstar]$. This uses the second property of the sequence of number of neighbors $(k_N)_{N=1}^\infty$ together with \cref{assum:lipschitz}. Second, we use a weak law of large numbers to show that the empirical conditional expectation converges in distribution to its expected value.

    \emph{Step 1.} We first show that $\EE[\Psi^{N, k_N}Y_N|S_1, \dots, S_N] \to \EE[\Ystar|\Sstar]$. By the definition of $\Psi^{N, k_N}$
    \begin{align}
        \EE[\Psi^{N, k_N}Y_N|S_1, \dots, S_N] = \frac{1}{k_N}\sum_{m=1}^M \sum_{n=1}^N 1\{S_n \text{\, is a \,} k_N \text{\,nearest-neighbor of \,}\Sstar_m\}\EE[Y_N | S_1, \dots, S_N].
    \end{align}
    By \cref{assum:test-train-dgp} and \cref{assum:lipschitz} for any $1 \leq m \leq M$,
    \begin{align}
    |\EE[(\Psi^{N, k_N}Y_N)_m|&S_1, \dots S_N] - \EE[\Ystar_m|\Sstar_m]| \\
    &= \left\vert \frac{1}{k_N}\sum_{n=1}^N 1\{S_n \text{\, is a \,} k_N \text{\,nearest-neighbor of \,}\Sstar_m\}(f(S_n) - f(\Sstar_m)) \right\vert\\
    & \leq \frac{L}{k_N} \sum_{n=1}^N 1\{S_n \text{\, is a \,} k_N \text{\,nearest-neighbor of \,}\Sstar_m\} d(\Sstar_m, S_n).
    \end{align} 
    By the second property of $(k_N)_{N=1}^\infty$ 
    \begin{align}
        \lim_{N \to \infty }\max_{1 \leq m \leq M} \frac{1}{k_N}\sum_{n=1}^N 1\{S_n \text{\, is a \,} k_N \text{\,nearest-neighbor of \,}\Sstar_m\}d(\Sstar_{m}, S_n) = 0.
    \end{align}
    Therefore,
    \begin{align}
        \lim_{N \to \infty }\max_{1 \leq m \leq M} \left| \EE[(\Psi^{N, k_N}Y_N)_m|S_1, \dots S_N] - \EE[\Ystar_m|\Sstar_m] \right| = 0.
    \end{align}

    We next show that $\Psi^{N, k_N}Y_N \to \EE[\Ystar|\Sstar]$ in distribution. For this we use a weak law of large numbers for triangular arrays. Centering gives us,
    \begin{align}
        (\Psi^{N, k_N}Y_N)_m &= \frac{1}{k_N}\sum_{n=1}^N 1\{S_n \text{\, is a \,} k_N \text{\,nearest-neighbor of \,}\Sstar_m\}(Y_n - \EE[Y_n|S_n]) \\ &+ \EE[(\Psi^{N, k_N}Y_N)_m|S_1, \dots S_N].
    \end{align}
    The random variables $Y_n - \EE[Y_n|S_n]$ have mean $0$. For each $1\leq m \leq M$, $N \in \NN$ and $1 \leq n \leq N$, define 
    \begin{align}
        \tilde{Y}^N_{n,m} = \frac{1}{k_N}1\{S_n \text{\, is a \,} k_N \text{\,nearest-neighbor of \,}\Sstar_m\}(Y_n - \EE[Y_n|S_n]).
    \end{align}
    For $N \in \mathbb{N}$. The conditional variance of the partial sums is 
    \begin{align}
        \Var[\sum_{n=1}^N\tilde{Y}^N_{n,m}] &= \frac{1}{k_N^2}\sum_{n=1}^N 1\{S_n \text{\, is a \,} k_N \text{\,nearest-neighbor of \,}\Sstar_m\}\EE[(Y_n - \EE[Y_n|S_n])^2|S_n] \\
        & \leq \frac{B_Y}{k_N}.
    \end{align}
    The inequality follows from \cref{assum:bounded-variance} and the fact that 
    \begin{align}
        \sum_{n=1}^N 1\{S_n \text{\, is a \,} k_N \text{\,nearest-neighbor of \,}\Sstar_m\} = k_N.
    \end{align}
    Therefore, for each $1 \leq m \leq M$, the sequence $(\tilde{Y}^N_{n,m})_{n=1}^N$ is a triangular array of independent random variables with mean $0$ and variance bounded by $\frac{B_Y}{k_N}$. By the first property of the $(k_N)_{N=1}^\infty$ sequence, $\Var(\sum_{n=1}^N \tilde{Y}^N_{n,m}) \to 0$ as $N \to \infty$. By Chebyshev's inequality
    \begin{align}
        \mathbb{P}\left(\left| \frac{1}{k_N}\sum_{n
    =1}^N 1\{S_n \text{\, is a \,} k_N \text{\,nearest-neighbor of \,}\Sstar_m\}(Y_n - \EE[Y_n|S_n])\right| > \epsilon \right) &\leq \frac{B_Y}{k_N\epsilon^2}.
    \end{align}
    Because $\frac{B_Y}{k_N\epsilon^2} \to 0$ as $N \to \infty$
    \begin{align}
        \frac{1}{k_N}\sum_{n=1}^N 1\{S_n \text{\, is a \,} k_N \text{\,nearest-neighbor of \,}\Sstar_m\}(Y_n - \EE[Y_n|S_n]) \to 0
    \end{align}
    in distribution for each $1 \leq m \leq M$.
    Therefore, $(\Psi^{N, k_N}Y_N)_m \to \EE[\Ystar_m|\Sstar_m]$ in distribution for each $1 \leq m \leq M$. 
\end{proof}

We now show that for any sequence $(k_N)_{N=1}^\infty$ that satisfies the two properties described in \cref{cor:average-distance-to-targets-tends-to-zero}, our point estimate $\pointestimate$ converges in distribution to the maximum likelihood parameter $\betamle$.

\begin{theorem}
    Suppose \cref{assum:beta-mle-exists-strictly-concave,assum:cov-fixed-fns,assum:test-train-dgp,assum:bounded-variance,assum:lipschitz}. Let $(k_N)_{n=1}^N$ be chosen as in \cref{thm:point-estimate-consistent-infill}.
    Then $\pointestimate \to \betamle$, where convergence is in distribution.
\end{theorem}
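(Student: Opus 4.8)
The plan is to prove consistency by a continuous-mapping argument. We write the point estimate as $\pointestimate = \tau(\Psi^{N,k_N}Y)$ and the estimand as $\betamle = \tau(\EE[\Ystar|\Sstar])$, establish that the argument $\Psi^{N,k_N}Y$ converges to $\EE[\Ystar|\Sstar]$, and then push this convergence through the map $\tau$ using its continuity. All three ingredients have already been assembled in the preceding lemmas, so the proof mostly consists of chaining them together and handling one technical subtlety about the domain of $\tau$.

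First I would verify that the adaptively chosen sequence $(k_N)_{N=1}^\infty$ has the two properties needed downstream. \Cref{prop:kn-sequence-props} supplies exactly this: under infill asymptotics, the recursive selection rule produces a sequence with $k_N \to \infty$ and $R_{N,k_N}\to 0$, i.e.\ the maximum distance from any target to any of its $k_N$ nearest neighbors vanishes. I would then apply \cref{cor:average-distance-to-targets-tends-to-zero} to convert this maximum-distance statement into the weaker average-distance condition that appears as the second hypothesis of \cref{lem:empirical-conditional-expectation-converges}. With both hypotheses ($k_N\to\infty$ and the average-distance condition) now verified, \cref{lem:empirical-conditional-expectation-converges} yields $\Psi^{N,k_N}Y \to \EE[\Ystar|\Sstar]$ in distribution. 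Because the limit $\EE[\Ystar|\Sstar]$ is a deterministic constant, this convergence in distribution is in fact equivalent to convergence in probability, a reduction I would record explicitly since it is what makes the final step clean.

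The last step applies the continuous mapping theorem. By \cref{lem:point-estimate-continuous}, $\tau$ is well-defined and continuously differentiable on an open neighborhood $U$ of $\EE[\Ystar|\Sstar]$; since $\tau$ is therefore continuous at the constant $\EE[\Ystar|\Sstar]$ and $\Psi^{N,k_N}Y \to \EE[\Ystar|\Sstar]$ in probability, continuity gives $\tau(\Psi^{N,k_N}Y) \to \tau(\EE[\Ystar|\Sstar])$, i.e.\ $\pointestimate \to \betamle$, in probability and hence in distribution.

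The main obstacle is that $\tau$ is only guaranteed to be well-defined and continuous \emph{locally}, on the neighborhood $U$, rather than globally, whereas a naive continuous mapping theorem presupposes a globally continuous map. I would resolve this precisely through the reduction to convergence in probability to a constant: the probability that $\Psi^{N,k_N}Y$ escapes $U$ tends to zero, so the (possibly ill-behaved) values of $\tau$ outside $U$ are asymptotically irrelevant. Formally, for any $\epsilon>0$ one bounds $\PP(\|\pointestimate - \betamle\| > \epsilon)$ by the sum of $\PP(\Psi^{N,k_N}Y \notin U)$, which vanishes, and the probability of an $\epsilon$-deviation occurring while $\Psi^{N,k_N}Y \in U$, which vanishes by continuity of $\tau$ on $U$ together with the convergence of its argument. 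Sending $N\to\infty$ and then $\epsilon\to 0$ completes the argument.
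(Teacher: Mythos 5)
Your proposal is correct and follows essentially the same route as the paper's proof: verify the two properties of $(k_N)$ via \cref{prop:kn-sequence-props} and \cref{cor:average-distance-to-targets-tends-to-zero}, invoke \cref{lem:empirical-conditional-expectation-converges} to get $\Psi^{N,k_N}Y \to \EE[\Ystar|\Sstar]$, and push through $\tau$ using \cref{lem:point-estimate-continuous} and the continuous mapping theorem. Your explicit handling of the fact that $\tau$ is only guaranteed continuous on a neighborhood of the limit (via convergence in probability to a constant) is a detail the paper's proof leaves implicit, but it does not change the argument.
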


\begin{proof}[Proof of \cref{thm:point-estimate-consistent-infill}]

    \Cref{prop:kn-sequence-props} and \cref{cor:average-distance-to-targets-tends-to-zero} imply the selected $k_n$ satisfy the assumptions of \cref{lem:empirical-conditional-expectation-converges}, and so
    \begin{align}
        \Psi^{N, k_N}Y_N \to \EE[\Ystar|\Sstar]
    \end{align}
    in distribution. By \cref{lem:point-estimate-continuous}, the map $\tau$ is continuous on an open neighborhood containing $\EE[\Ystar|\Sstar]$. The continuous mapping theorem implies
    \begin{align}
        \pointestimate = \tau(\Psi^{N, k_N}Y_N) \to \tau(\EE[\Ystar|\Sstar]) = \betamle
    \end{align}
    in distribution.
\end{proof}

\section{Proof of Asymptotic Validity of Confidence Intervals}\label{app:proof_confidence_interval}
In this section, we prove \cref{thm:asymptotic-conservative}. We first prove a lemma that states that, for large $N$, the nearest-neighbor sets used in estimation are disjoint for each $m$. This simplifies our analysis, as many of the sums involved then consist of independent random variables. We then show that our variance estimate is consistent, and that our stated bound on the bias is an upper bound on a consistent estimate of the bias. Next, we prove asymptotic normality of our estimate of $\EE[\Ystar|\Sstar]$. Finally, we use the delta method to prove asyptotic normality of our estimator, and combine this with our earlier consistency results for the moments to show \cref{thm:asymptotic-conservative}.

\subsection{Preliminary Results}

We first show the following lemma, which will be used in several subsequent results. It states that for large $N$, the nearest-neighbor sets used for estimating $\EE[\Ystar|\Sstar]$ are disjoint.

\begin{lemma}\label{lem:disjoint-neighborhoods}
    Let $(S_n)_{n=1}^N$ be a sequence of points in $\spatialdomain$ such that infill asymptotics holds with respect to $(\Sstar_m)_{m=1}^M$. Suppose that $k_N$ is chosen according to \cref{thm:point-estimate-consistent-infill}. Then there exists an $N_0$ such that for all $N \geq N_0$, and all $1 \leq m,m' \leq M$ with $m \neq m'$ and $1 \leq n \leq N$, $
    \Psi^{N, k_N}_{mn} \Psi^{N, k_N}_{m'n} = 0.
    $
\end{lemma}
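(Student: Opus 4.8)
The plan is to translate the algebraic statement $\Psi^{N,k_N}_{mn}\Psi^{N,k_N}_{m'n}=0$ into its geometric meaning and then argue by a separation-of-scales contradiction. By \cref{def:knn-psi}, this product is nonzero only if a single training location $S_n$ lies simultaneously among the $k_N$ closest training locations to two distinct targets $\Sstar_m$ and $\Sstar_{m'}$. So it suffices to show that, for all large $N$, no training point can belong to the $k_N$-nearest-neighbor set of two distinct targets at once. The crucial ingredient is property 2 of \cref{prop:kn-sequence-props}: under the adaptive choice of $k_N$, the worst-case neighbor radius $R_{N,k_N}$ shrinks to $0$, so eventually every $k_N$-nearest-neighbor of every target lies inside an arbitrarily small ball around that target.

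First I would record the minimum separation of the (fixed, finitely many, distinct) target locations, $\delta := \min_{m \neq m'} d(\Sstar_m, \Sstar_{m'})$, which is strictly positive because there are only finitely many distinct targets. (If one allowed coincident targets the claim would be false, so distinctness of the $\Sstar_m$ is the relevant standing assumption, and I would note it explicitly.) Since \cref{prop:kn-sequence-props} gives $R_{N,k_N} \to 0$, I can then choose $N_0$ so that $R_{N,k_N} < \delta/2$ for all $N \geq N_0$.

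Then I would argue by contradiction: fix $N \geq N_0$ and suppose some $S_n$ is a $k_N$-nearest-neighbor of both $\Sstar_m$ and $\Sstar_{m'}$ with $m \neq m'$. By the definition of $R_{N,k_N}$, both $d(\Sstar_m, S_n)$ and $d(\Sstar_{m'}, S_n)$ are at most $R_{N,k_N} < \delta/2$, so the triangle inequality yields $d(\Sstar_m, \Sstar_{m'}) \leq d(\Sstar_m, S_n) + d(S_n, \Sstar_{m'}) < \delta$, contradicting $d(\Sstar_m, \Sstar_{m'}) \geq \delta$. Hence no such $S_n$ exists, and the product of indicator weights vanishes for every $m \neq m'$ and every $n$, as claimed. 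I do not expect a genuine obstacle: the entire content is that the neighbor radii collapse faster than the fixed target spacing, which \cref{prop:kn-sequence-props} already supplies. The only points requiring care are the (implicit) distinctness of the target locations, which guarantees $\delta > 0$, and correctly reading the two per-target distance bounds off the definition of $R_{N,k_N}$.
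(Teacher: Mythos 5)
Your proposal is correct and follows essentially the same argument as the paper: both invoke the strictly positive minimum separation of the (distinct) targets, use property 2 of \cref{prop:kn-sequence-props} to force the $k_N$-nearest-neighbor radii below half that separation for large $N$, and conclude via the triangle inequality that no training point can be a shared neighbor. Your explicit remark that distinctness of the $\Sstar_m$ is needed matches the paper's (brief) acknowledgment of the same point.
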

\begin{proof}
    Because all the $(\Sstar_m)_{m=1}^M$ are distinct we can find an $\epsilon > 0$ such that for all $1 \leq m,m' \leq M$, $m \neq m'$, we have that $d_{\spatialdomain}(\Sstar_m, \Sstar_{m'}) > 2\epsilon$. \Cref{prop:kn-sequence-props}, property 2 implies that there exists an $N_0$ such that for all $N \geq N_0$ and all $1 \leq m \leq M$, if $S_n$ is a $k_N$ nearest-neighbor of $\Sstar_m$, then $d_{\spatialdomain}(S_n, \Sstar_m) < \epsilon$. 
    For all $1 \leq m,m' \leq M$ with $m \neq m'$ and any $1 \leq n \leq N$ the triangle inequality states
    \begin{align}
        d_{\spatialdomain}(S_n, \Sstar_m) + d_{\spatialdomain}(S_n, \Sstar_{m'}) &\geq d_{\spatialdomain}(\Sstar_m, \Sstar_{m'}) > 2\epsilon.
    \end{align}
    Therefore either $ d_{\spatialdomain}(S_n, \Sstar_m) > \epsilon$ or $d_{\spatialdomain}(S_n, \Sstar_{m'}) > \epsilon$. This implies that for all $N \geq N_0$, $S_n$ cannot be a $k_N$ nearest-neighbor of both $\Sstar_m$ and $\Sstar_{m'}$. We conclude that for all $N \geq N_0$, and all $1 \leq m,m' \leq M$ with $m \neq m'$ and $1 \leq n \leq N$,  $
        \Psi^{N, k_N}_{mn} \Psi^{N, k_N}_{m'n} = 0.
    $
\end{proof}

We next show that one point cannot be the nearest-neighbor of many other points in Euclidean space. This is a key lemma that will be used in the our proof of consistency of our variance estimate. \cref{lem:variance-estimate-consistent}. It us used to show that the estimate of the variance does not place too much weight on any single observation.

\begin{lemma}\label{lem:max-num-nearest-neighbors}
    Let $A \subset \RR^{d}$ a finite set. For any $p \in A$, define the set
    \begin{align}
        A_p:= \{a \in A: d(a, p) = \min_{a' \in A} d(a, a')\}.
    \end{align}
    Then $|A_p| \leq H_d$ where $H_d$ is a constant that is independent of the set $A$ and the point $p$.
\end{lemma}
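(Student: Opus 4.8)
The plan is to read $A_p$ as the set of points of $A$ for which $p$ is a nearest neighbor (distances taken over the \emph{other} points of $A$, matching the convention of \cref{def:nn-variance-estimator}), and to bound its size by a purely geometric packing argument on the unit sphere $S^{d-1}\subset\RR^d$. The two ingredients are: (i) any two distinct points of $A_p$ subtend an angle of at least $\pi/3$ at $p$; and (ii) the number of directions in $\RR^d$ that are pairwise separated by angle at least $\pi/3$ is bounded by a constant $H_d$ depending only on $d$. Composing these gives the claim.

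For ingredient (i), fix distinct $a,a'\in A_p$ (both necessarily different from $p$, since $p$ has positive distance to every other point of the finite set $A$). Because $p$ is a nearest neighbor of $a$, no point of $A$ is strictly closer to $a$ than $p$, so in particular $d(a,p)\le d(a,a')$; symmetrically $d(a',p)\le d(a,a')$. Writing $u=(a-p)/\|a-p\|_2$ and $u'=(a'-p)/\|a'-p\|_2$ for the two unit directions and $\gamma$ for the angle at $p$, the law of cosines $d(a,a')^2 = \|a-p\|_2^2 + \|a'-p\|_2^2 - 2\|a-p\|_2\|a'-p\|_2\cos\gamma$ combined with the two inequalities forces $\cos\gamma \le \tfrac12$. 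Geometrically this is just the statement that $aa'$ is a longest side of the triangle $a,p,a'$, so the angle opposite it (at $p$) is a largest angle and hence at least $\pi/3$; the law-of-cosines computation handles degenerate/collinear configurations without separate cases. Thus $\langle u,u'\rangle\le \cos(\pi/3)=\tfrac12$.

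For ingredient (ii), I would use a volume/covering argument rather than invoking the kissing number. Map each $a\in A_p$ to its direction $u_a\in S^{d-1}$; by (i) these directions are pairwise separated by angle at least $\pi/3$. Around each $u_a$ place the open spherical cap of angular radius $\pi/6$. If two such caps shared a point, the triangle inequality on $S^{d-1}$ would force their centers within angle $\pi/3$, contradicting (i); hence the caps are pairwise disjoint. Each cap has the same surface measure $\sigma_d>0$, depending only on $d$, so disjointness yields $|A_p|\,\sigma_d \le \mathrm{Area}(S^{d-1})$, i.e.\ $|A_p|\le \mathrm{Area}(S^{d-1})/\sigma_d =: H_d$, a constant independent of $A$ and $p$.

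The argument is essentially routine, and the only points that need care are bookkeeping rather than genuine obstacles. First, the definition of $A_p$ must exclude the trivial choice $a'=a$ in the minimum (otherwise $A_p=\{p\}$ and the bound is vacuous); this is exactly the ``nearest neighbor among the other points'' convention of \cref{def:nn-variance-estimator}. Second, because distance ties are allowed, a point $a$ may have several nearest neighbors and the inequalities in (i) are weak, so the subtended angle can equal $\pi/3$ exactly; using \emph{open} caps of radius $\pi/6$ (whose measure still equals $\sigma_d$, as the boundary is null) keeps the caps disjoint in this boundary case and preserves the bound. The constant $H_d$ obtained is not sharp, but sharpness is irrelevant here: all that the downstream use in \cref{lem:variance-estimate-consistent} requires is a finite bound depending only on the ambient dimension.
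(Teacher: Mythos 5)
Your proof is correct and follows essentially the same route as the paper's: both establish that any two distinct points of $A_p$ subtend an angle of at least $\pi/3$ at $p$ (via the observation that $\|a-a'\|$ dominates both $\|a-p\|$ and $\|a'-p\|$), and then bound the number of pairwise $\pi/3$-separated directions by a packing argument on the unit sphere. The only difference is cosmetic --- the paper cites finiteness of the $1/2$-packing number of the sphere via compactness, while you supply the standard disjoint-spherical-caps volume argument for that finiteness, and you are somewhat more explicit about the self-exclusion convention in the definition of $A_p$ and about ties.
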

\begin{proof}
    For a point $p$ and a set $A$, let $A- \{p\} = \{a-p: a \in A\}$. Then, $A_p = (A-\{p\})_0$. As the set $A$ is an arbitrary finite set in our statement, we may assume $p = 0$ without loss of generality. 
    
   We can restrict to cases where $|A_0| \geq 2$. Otherwise the constant $H_d = 2$ suffices. In the case, $|A_0| \geq 2$, let $a, a' \in A_0$ be distinct points.  Without loss of generality, we assume that $\|a\| \leq \|a'\|$ (otherwise rename the points). 
   
   For any such points, the definition of $A_0$ implies
    \begin{align}\label{eqn:nearest-neighbors-dist}
        \|a\| \leq \|a-a'\| \quad \text{and} \quad \|a'\| \leq \|a -a'\|.
    \end{align}
    We will show that this implies that the angle between $a$ and $a'$ cannot be too small. Using the Hilbert space structure of $\RR^{d}$, we can rewrite \cref{eqn:nearest-neighbors-dist}
    \begin{align}
        0 \leq \|a'\|^2 - 2\langle a, a'\rangle \quad \text{and} \quad \|a\|^2 - 2\langle a, a'\rangle.
    \end{align}
    Define,
    \begin{align}
    \theta = \frac{\langle a, a'\rangle}{\|a\|\|a'\|}.
    \end{align}
    Expanding the squared distance
    \begin{align}
        \|a-a'\|^2 = \|a\|^2 + \|a'\|^2  - 2\theta \|a\|\|a'\|.
    \end{align}
   Then
    \begin{align}
        \|a\|^2 - 2\theta \|a\|\|a'\| > 0
    \end{align}
    and so, using that $\|a\| \leq \|a'\|$,
    $
        \cos(\theta) \leq \frac{1}{2}.
    $
    This implies that the normalized vectors $\frac{a}{\|a\|}$ and $\frac{a'}{\|a'\|}$ are at least $60^{\circ}$ apart, which in turn implies that they are separated by a distance of at least $1$. The number of distinct points satisfying this criterion separation criterion is upper bounded by the $1/2$-packing number of the unit sphere embedded in $\RR^d$, which is finite because the sphere is compact. Therefore, there can be at most $H_d$ points in $A_0$, where $H_d$ is the $1/2$-packing number of the unit sphere embedded in $\RR^d$.
\end{proof}

\subsection{Consistency of Variance Estimate}\label{app:consistency_variance_estimate}
Define the sequence of maps $\zeta^N:\{1, \dots, N\} \to \{1, \dots, N\}$ to map $S_n$ to the index of its nearest-neighbor (not equal to itself). We assume that all $S_n$ are distinct, although random tie-breaking can be used otherwise, with some added complexity needed to handle additional probabilistic arguments.

\begin{lemma}\label{lem:variance-estimate-consistent}
    Let $(S_n)_{n=1}^N$ be a sequence of points in $\RR^{d}$ such that infill asymptotics holds with respect to $(\Sstar_m)_{m=1}^M$. Suppose \cref{assum:cov-fixed-fns,assum:test-train-dgp,assum:lipschitz,assum:bounded-variance,assum:bounded-4th-moment,assum:variance-continuous}. Then $k_N\Psi^{N, k_N} \Lambda (\Psi^{N, k_n})^{\transpose} \to \Lambda^{\star}$, where $\Lambda^N$ is a diagonal matrix with $\Lambda^N_{nn} = \frac{1}{2}(Y_n - Y_{\zeta^N(n)})$ and $\Lambda^{\star}$ is a diagonal matrix with $\Lambda^{\star} = \Var[\Ystar_m | \Sstar_m]$ for $1 \leq m \leq M$ and convergence is in distribution. 
\end{lemma}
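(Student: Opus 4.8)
The plan is to reduce the matrix statement to $M$ scalar weak laws of large numbers (one per target) and then to tame the dependence created by the shared nearest-neighbor structure. First I would expand the $(m,m')$ entry: since $\Psi^{N,k_N}_{mn} = \tfrac{1}{k_N}\mathbf{1}\{S_n \text{ is a } k_N\text{-NN of } \Sstar_m\}$, the entry equals $\tfrac{1}{k_N}\sum_{n=1}^N \mathbf{1}\{S_n\in\mathrm{NN}_m\}\,\mathbf{1}\{S_n\in\mathrm{NN}_{m'}\}\,\Lambda^N_{nn}$, where $\mathrm{NN}_m$ denotes the $k_N$ nearest neighbors of $\Sstar_m$. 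Because $\Lambda^\star$ is a deterministic matrix, convergence in distribution is equivalent to convergence in probability of each entry, so it suffices to treat entries one at a time. By \cref{lem:disjoint-neighborhoods} there is an $N_0$ beyond which the neighbor sets of distinct targets are disjoint; hence for $N\ge N_0$ every off-diagonal entry is exactly $0$, matching $\Lambda^\star$. This leaves the $M$ diagonal terms $V^N_m := \tfrac{1}{k_N}\sum_{n\in\mathrm{NN}_m}\tfrac12(Y_n-Y_{\zeta^N(n)})^2$, and I must show $V^N_m\to\rho^2(\Sstar_m)=\Var[\Ystar_m\mid\Sstar_m]$ in probability for each $m$.

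Next I would handle the mean. Conditioning on the (fixed) locations and using independence of $Y_n$ and $Y_{\zeta^N(n)}$ (\cref{assum:test-train-dgp}), each summand has conditional expectation $\tfrac12\big(\rho^2(S_n)+\rho^2(S_{\zeta^N(n)})\big)+\tfrac12\big(f(S_n)-f(S_{\zeta^N(n)})\big)^2$. For $n\in\mathrm{NN}_m$, \cref{prop:kn-sequence-props} gives $d(S_n,\Sstar_m)\le R_{N,k_N}\to0$; since $k_N\to\infty$ there are at least two neighbors for large $N$, so another point of $\mathrm{NN}_m$ forces $d(S_n,S_{\zeta^N(n)})\le 2R_{N,k_N}\to0$, whence also $d(S_{\zeta^N(n)},\Sstar_m)\le 3R_{N,k_N}\to0$ by the triangle inequality. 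The Lipschitz term is then at most $\tfrac12 L^2(2R_{N,k_N})^2\to0$ by \cref{assum:lipschitz}, and continuity of $\rho^2$ at $\Sstar_m$ (\cref{assum:variance-continuous}) makes $\rho^2(S_n)$ and $\rho^2(S_{\zeta^N(n)})$ uniformly close to $\rho^2(\Sstar_m)$. Averaging these uniform bounds over the $k_N$ neighbors yields $\EE[V^N_m\mid\text{locations}]\to\rho^2(\Sstar_m)$.

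The hard part is bounding $\Var[V^N_m]$, because the summands $T_n:=\tfrac12(Y_n-Y_{\zeta^N(n)})^2$ are dependent: $T_n$ and $T_{n'}$ share a response exactly when $\{n,\zeta^N(n)\}\cap\{n',\zeta^N(n')\}\ne\emptyset$. This is where \cref{lem:max-num-nearest-neighbors} is essential. For a fixed $n$ there is at most one $n'$ with $n'=\zeta^N(n)$, at most $H_d$ indices with $\zeta^N(n')=n$, and at most $H_d$ with $\zeta^N(n')=\zeta^N(n)$, so $T_n$ is correlated with at most $1+2H_d$ others: a bounded-degree dependency graph. Writing $Y_n-Y_{\zeta^N(n)}=U-V+\mu$ with $U,V$ the independent centered responses and $\mu=f(S_n)-f(S_{\zeta^N(n)})$ bounded (via \cref{assum:lipschitz}), expanding $(U-V+\mu)^4$ and invoking \cref{assum:bounded-4th-moment,assum:bounded-variance} gives a uniform bound $\EE[T_n^2]\le D$; Cauchy--Schwarz then bounds each covariance by $D$. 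Hence $\Var\big[\sum_{n\in\mathrm{NN}_m}T_n\big]\le (2+2H_d)D\,k_N$, so $\Var[V^N_m]=k_N^{-2}\Var\big[\sum_n T_n\big]=O(1/k_N)\to0$.

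Finally, Chebyshev's inequality upgrades the mean convergence to $V^N_m\to\rho^2(\Sstar_m)$ in probability, and assembling the $M$ diagonal limits with the vanishing off-diagonals gives $k_N\Psi^{N,k_N}\Lambda^N(\Psi^{N,k_N})^{\transpose}\to\Lambda^\star$ in distribution. The main obstacle is the dependency control in the third paragraph: the geometric packing bound $H_d$ is precisely what keeps the number of correlated pairs linear in $k_N$ rather than quadratic, which is what makes the variance vanish.
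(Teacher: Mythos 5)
Your proof is correct, and for the diagonal entries it takes a genuinely different route from the paper's. The paper expands $\tfrac{1}{2}(Y_n - Y_{\zeta^N(n)})^2$ into three separate weighted sums, $\Gamma_1 = \tfrac12\sum_n \Psi^{N,k_N}_{mn}Y_n^2$, $\Gamma_2 = \tfrac12\sum_n \Psi^{N,k_N}_{mn}Y_{\zeta^N(n)}^2$, and $\Gamma_3 = \sum_n \Psi^{N,k_N}_{mn}Y_nY_{\zeta^N(n)}$, computes the limit of the mean and shows the variance vanishes for each separately, and recombines via Slutsky (the limits $\tfrac12(\Var+\EE^2)+\tfrac12(\Var+\EE^2)-\EE^2$ telescope to the variance). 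You instead keep each summand $T_n = \tfrac12(Y_n - Y_{\zeta^N(n)})^2$ intact, compute its conditional mean directly as $\tfrac12(\rho^2(S_n)+\rho^2(S_{\zeta^N(n)}))+\tfrac12(f(S_n)-f(S_{\zeta^N(n)}))^2$ --- so the target quantity $\rho^2(\Sstar_m)$ appears immediately and the bias is visibly a Lipschitz-squared term --- and then control $\Var[\sum_n T_n]$ with a single bounded-degree dependency-graph argument: $T_n$ and $T_{n'}$ are uncorrelated unless $\{n,\zeta^N(n)\}\cap\{n',\zeta^N(n')\}\neq\emptyset$, and \cref{lem:max-num-nearest-neighbors} caps the degree at $O(H_d)$, so the covariance sum is $O(k_N)$ rather than $O(k_N^2)$. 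Both proofs lean on the same two ingredients (\cref{lem:disjoint-neighborhoods} for the off-diagonals and the packing bound $H_d$ for the dependence), but your version needs only one mean computation and one variance computation instead of three of each, and it makes the role of $H_d$ more transparent; the paper's decomposition buys sums whose weights are simpler to analyze term by term, at the cost of the slightly delicate recombination of limits. Two cosmetic points: your degree count of ``$1+2H_d$ others'' omits the case $n'=n$ itself, but your final bound $(2+2H_d)Dk_N$ already accounts for it; and note the lemma as printed has typos (the missing square in $\Lambda^N_{nn}$ and the missing index on $\Lambda^\star_{mm}$) that you correctly read through.
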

\begin{proof}
    We write entries in the matrix
    \begin{align}
        k_N(\Psi^{N, k_N} \Lambda^N (\Psi^{N, k_N})^{\transpose})_{mm'} = k_N\sum_{n=1}^N \Psi^{N, k_N}_{mn} \Psi^{N, k_N}_{m'n} \frac{1}{2}(Y_n - Y_{\zeta^N(n)})^2. \label{eqn:initial-variance-sum}
    \end{align}
    By \cref{lem:disjoint-neighborhoods}, for all $N$ sufficiently large, for $m \neq m'$, we have $\Psi^{N, k_N}_{mn} \Psi^{N, k_N}_{m'n} = 0$. Therefore, for all $N$ sufficiently large,  $k_N(\Psi^{N, k_N} \Lambda^N (\Psi^{N, k_N})^{\transpose})_{mm'}$ is diagonal, and we need only consider the entries with $m = m'$.

    We expand the quadratic form in \cref{eqn:initial-variance-sum}, and use the identity $\Psi^{N, k_N}_{mn} = k_N(\Psi^{N, k_N}_{mn})^2$
    \begin{align}
         k_N(\Psi^{N, k_N} \Lambda^N (\Psi^{N, k_N})^{\transpose})_{mm} \!=\!\underbrace{\frac{1}{2}\sum_{n=1}^N \Psi^{N, k_N}_{mn} Y_n^2}_{:=\Gamma_1} + \underbrace{\frac{1}{2}\sum_{n=1}^N \Psi^{N, k_N}_{mn} Y_{\zeta^N(n)}^2}_{\coloneq\Gamma_2} - \underbrace{\sum_{n=1}^N \Psi^{N, k_N}_{mn} Y_n Y_{\zeta^N(n)}}_{\coloneq\Gamma_2}. \label{eqn:variance-sum-expanded}
    \end{align}
    We will show that the terms $\Gamma_1$ and $\Gamma_2$ each converge to $\frac{1}{2}(\Var[\Ystar|\Sstar] + \EE[\Ystar|\Sstar]^2)$, and $\Gamma_3$ converges in distribution to $\EE[\Ystar|\Sstar]^2$. Given these results, Slutsky's lemma \citep[Lemma 2.8]{Vaart_1998}, implies completes the proof of the lemma, as each term converges to a constant. For $\Gamma_1, \Gamma_2$ and $\Gamma_3$, the general proof of convergence will be the same: we first show the expectation converges to the claimed value, and then show that the variance converges to $0$. Convergence in distribution is a consequence of the variance tending to $0$ and Chebyshev's inequality.

    The expected value of $\Gamma_1$ is
    \begin{align}
        \EE[\Gamma_1 ] & = \frac{1}{2k_n}\sum_{n=1}^N 1\{S_n \text{\,is a\,} k_N \text{\,nearest-neighbor of\,} \Sstar_m\} \EE[Y_n^2] \\ &= \frac{1}{2k_n}\sum_{n=1}^N 1\{S_n \text{\,is a\,} k_N \text{\,nearest-neighbor of\,} \Sstar_m\} (\EE[Y_n]^2 + \Var[Y_n]). 
    \end{align}
    \Cref{prop:kn-sequence-props}, property 2, implies that $d(S_n, \Sstar_m) \to 0$ for all terms such that $1\{S_n \text{\,is a\,} k_N \text{\,nearest-neighbor of\,}\} \neq 0$. Using continuity of the mean and variance of the response (\cref{assum:lipschitz,assum:variance-continuous})
    \begin{align}
        \lim_{N \to \infty}\!\max_{1\leq n \leq N} 1\{S_n \text{\,is a\,} k_N \text{\,near.\ neigh.\  of\,} \Sstar_m\} ((\EE[Y_n]^2 + \Var[Y_n]) \!-\! (\EE[\Ystar_m]^2 + \Var[\Ystar_m])) =  0.
    \end{align}
    And so
    \begin{align}
        \lim_{N \to \infty} \frac{1}{k_n}\sum_{n=1}^N 1\{S_n \text{\,is a\,} k_N \text{\,nearest-neighbor of\,} \Sstar_m\} (\EE[Y_n]^2 + \Var[Y_n]) = \EE[\Ystar_m]^2 + \Var[\Ystar_m].
    \end{align}
    
    We next verify that the variance of $\Gamma_1$ tends to $0$. Because the $Y_n$ are independent
    \begin{align}
        \Var\Big[\frac{1}{2}\sum_{n=1}^N  \Psi^{N, k_N}_{nm} Y_n] 
     = \frac{1}{4k_n^2}\sum_{n=1}^N 1\{S_n \text{\,is a\,} k_N \text{\,nearest-neighbor of\,} \Sstar_m\}\Var[Y_n^2].
    \end{align}
    \Cref{assum:lipschitz} implies that within an open neighborhood of any of the test locations, $\EE[Y_n]$ is uniformly bounded. Combining this with \cref{assum:bounded-variance,assum:bounded-4th-moment} for $N$ sufficiently large, there exists a constant $K$ such that $1\{S_n \text{\,is a\,} k_N \text{\,nearest-neighbor of\,} \Sstar_m\} \Var[Y_n^2] \leq1\{S_n \text{\,is a\,} k_N \text{\,nearest-neighbor of\,} \Sstar_m\}K$. Therefore,
    \begin{align}
        \lim_{N \to \infty } \Var\Big[\Gamma_1 \Big]  &\leq \lim_{N \to \infty } \frac{K}{4k_N}= 0
    \end{align}
    where the last equality used that $\lim_{N \to \infty} k_N = \infty$ (\cref{prop:kn-sequence-props}, property 1).
    
    We now consider $\Gamma_2$ (\cref{eqn:variance-sum-expanded}). Because $S_{\zeta^N(n)}$ is the nearest-neighbor of $S_n$, $d(S_{\zeta^N(n)}, S_n) \leq d(\Sstar_m, S_n) + \min_{n'\neq n} d(S_{n'},\Sstar_m)$ and so
    \begin{align}
        d(\Sstar_m, S_{\zeta^N(n)}) \leq d(\Sstar_m, S_n) + d(S_{\zeta^N(n)}, S_n) 
         = 2d(\Sstar_m, S_n) + \min_{n'\neq n} d(S_{n'},\Sstar_m). 
    \end{align}
    By the infill assumption and \cref{prop:kn-sequence-props}, property 2, \begin{align}
        \lim_{N \to \infty} 1\{S_n \text{\,is a\,} k_N \text{\,nearest-neighbor of\,} \Sstar_m\}(2d(\Sstar_m, S_n) + \min_{n'\neq n} d(S_{n'},\Sstar_m)) = 0.
    \end{align}
    
    We can now apply the same argument as we used for $\Gamma_1$ to show the expectation of $\Gamma_2$ converges:
    \begin{align}
        \EE[\Gamma_2] =
        &= \frac{1}{2}\sum_{n=1}^N \Psi^{N, k_N}_{nm} (\EE[Y_{\zeta(n)}]^2 + \Var[Y_{\zeta^N(n)}]). 
    \end{align}

    Now using \cref{assum:lipschitz}, \cref{assum:variance-continuous} and that  $d(S_{\zeta^N(n)}, \Sstar_m) \to 0$, for all terms such that $\Psi^{N, k_N}_{nm} \neq 0$, 
    \begin{align}
         \lim_{N \to \infty} \frac{1}{2}\sum_{n=1}^N  \Psi^{N, k_N}_{nm}(\EE[Y_{\zeta(n)}]^2 + \Var[Y_{\zeta^N(n)}])= \frac{1}{2}(\EE[\Ystar]^2 + \Var[\Ystar]). \nonumber
    \end{align}

    We now show the variance of $\Gamma_2$ tends to $0$. 
    \begin{align}
        \frac{1}{2}\sum_{n=1}^N  \Psi^{N, k_N}_{nm} Y_{\zeta(n)}^2 
        = \frac{1}{2}\sum_{n'=1}^N \left(\sum_{n=1}^N \Psi^{N, k_N}_{nm}  1\{n'
        = \zeta^N(n)\}\right)Y_{n'}^2.
    \end{align}
    This is a sum of independent terms. We define the weights
    \begin{align}
        a_{n',m}^N = \left(\frac{1}{2}\sum_{n=1}^N  \Psi^{N, k_N}_{nm} 1\{n'
        = \zeta^N(n)\}\right).
    \end{align}
    Then,
    \begin{align}
        \Var \Big[\frac{1}{2}\sum_{n=1}^N \Psi^{N, k_N}_{nm} Y_{\zeta^N(n)}^2\Big] = \sum_{n'=1}^N (a_{n',m}^N)^2 \Var[Y_{n'}^2]
    \end{align}
    From the definition of $a_{n',m}^N$, and using \cref{lem:max-num-nearest-neighbors}
    \begin{align}
        \sum_{n'=1}^N (a_{n',m}^N)^2 &= \frac{1}{4} \Bigg(\sum_{n=1}^N \Psi^{N, k_N}_{nm} \sum_{r=1}^N \Psi^{N, k_N}_{rm} 1\{r
        = \zeta^N(n)\}\Bigg) \\
        & \leq \frac{1}{4k_N} \Bigg(\sum_{n=1}^N \Psi^{N, k_N}_{nm}H_d\Bigg)
        \\
        &\leq \frac{H_d}{4k_N}.
    \end{align}
    Also, for any open neighborhood containing $\Sstar_m$, for all $N$ sufficiently large $a_{n'}^N=0$ unless $S_{n'}$ is contained in this open neighborhood, so that for terms with non-zero coefficient $\Var [Y_{n'}^2]$ is uniformly bounded by some constant $K$ by combining \cref{assum:lipschitz,assum:bounded-variance,assum:bounded-4th-moment}. Therefore, for all $N$ sufficiently large, $\sum_{n'=1}^N (a_{n'}^N)^2 \Var[Y_{n'}^2] \leq \frac{H_dK}{4k_N}$ which tends to $0$ because $k_N \to \infty$ (\cref{prop:kn-sequence-props}, property 1).

    We consider $\Gamma_3$ (\cref{eqn:variance-sum-expanded}).
    \begin{align}
        \sum_{n=1}^N \Psi^{N, k_N}_{mn} \EE[Y_n Y_{\zeta^N(n)}]= \sum_{n=1}^N \Psi^{N, k_N}_{mn} \EE[Y_n] \EE[Y_{\zeta^N(n)}].
    \end{align}
    Because $\EE[Y_n], \EE[Y_{\zeta^N(n)}]\to \EE[\Ystar_m]$ for all $n$ such that $\Psi^{N, k_N}_{mn} \neq 0$, this converges to $\EE[\Ystar_m]^2$. It remains to show that the variance of $\Gamma_3$ converges $0$. We expand into variances and covariances,
    \begin{align}
        \Var[ \sum_{n=1}^N \Psi^{N, k_N}_{mn} Y_n Y_{\zeta(n)}]
         = \sum_{n'=1}^N\sum_{n=1}^N \Psi^{N, k_N}_{mn}\Psi^{N, k_N}_{mn'} \text{Cov}(Y_nY_{\zeta(n)}, Y_{n'}Y_{\zeta(n')}).
    \end{align}
    We can upper bound the covariance term as,
    \begin{align}
        &\left|\text{Cov}(Y_nY_{\zeta(n)}, Y_{n'}Y_{\zeta(n')})\right| \\
        & \leq \left(1\{n=n'\} + 1\{n=\zeta(n')\} + 1\{n'=\zeta(n)\} + 1\{\zeta(n)=\zeta(n')\}\right) \max_{1 \leq n \leq N} \Var(Y_nY_{\zeta(n)}).
    \end{align}
    Because $Y_n$, $Y_{\zeta(n)}$ are independent,
    \begin{align}
        \Var(Y_nY_{\zeta(n)}) = \Var(Y_n)\Var(Y_{\zeta(n)}) + \Var(Y_n)\EE[Y_{\zeta(n)}]^2 + \Var(Y_{\zeta(n)})\EE[Y_n]^2.
    \end{align}
    This is bounded by a constant in a region containing the training locations by \cref{assum:bounded-variance,assum:lipschitz}.  Call this constant $\gamma$. Then,
    \begin{align}
        &\Var[ \sum_{n=1}^N \Psi^{N, k_N}_{mn} Y_n Y_{\zeta(n)}] \\&\leq \gamma\sum_{n=1}^N\sum_{n'=1}^N \Psi^{N, k_N}_{mn}\Psi^{N, k_N}_{mn'} \left(1\{n=n'\} + 1\{n=\zeta(n')\} + 1\{n'=\zeta(n)\} + 1\{\zeta(n)=\zeta(n')\}\right).
    \end{align}
    We now count the number of non-zero terms in this double sum and show that it is $O(k_N)$. The indicator $n=n'$ contributes exactly $k_N$ non-zero terms; \cref{lem:max-num-nearest-neighbors} implies the indicators $1\{n=\zeta(n')\}, 1\{n'=\zeta(n)\}$ contribute at most $H_dk_N$. Finally, 
    \begin{align}
        \sum_{n=1}^N\sum_{n'=1}^N& \Psi^{N, k_N}_{mn}\Psi^{N, k_N}_{mn'} 1\{\zeta^N(n) = \zeta^N(n')\} \\
        &= \sum_{r=1}^N 1\{\exists n: r = \zeta^N(n) \} \sum_{n=1}^N\sum_{n'=1}^N \Psi^{N, k_N}_{mn}\Psi^{N, k_N}_{mn'} 1\{\zeta^N(n) = r\}1\{\zeta^N(n') = r\} \\
        &= \sum_{r=1}^N  1\{\exists n: r = \zeta^N(n)\} \left(\sum_{n=1}^N  \Psi^{N, k_N}_{mn} 1\{\zeta^N(n) = r\}\right)^2.
    \end{align}
    The total number of $r$ that are nearest-neighbors to a point that is a $k_N$ nearest-neighbor of $\Sstar_m$ cannot exceed $k_N$. And $\left(\sum_{n=1}^N  \Psi^{N, k_N}_{mn} 1\{\zeta^N(n) = r\}\right)^2 \leq \frac{H_d}{k_N}$. Therefore, this final sum is $O(1/k_N)$. We conclude the variance of $\Gamma_3$ converges to zero as $N$ tends to infinity.
\end{proof}

\subsection{Asymptotic Normality of Estimate of Conditional Expectation}\label{app:normality_conditional_expectation}
We begin by proving that the estimate of the conditional expectation $\Psi^{N, k_N} Y$ is asymptotically normal. We first recall the Lyapunov central limit theorem for triangular arrays.
\begin{theorem}[Lyapunov Central Limit Theorem, Theorem 27.3 \citealt{billingsley1995probability}]\label{thm:lyapunov-clt}
Let $\{Z_{n1}, \dots, Z_{n t_n}\}$ be independent random variables for each $n \in \mathbb{N}$, with
\[
\mu_{nt} = \EE[Z_{nt}], \qquad 
\sigma_{nt}^2 = \Var[Z_{nt}], \qquad
s_n^2 = \sum_{t=1}^{t_n} \sigma_{nt}^2.
\]
Assume $s_n^2 \to \infty$ and $s_n > 0$ for all $n$. Suppose there exists $\delta > 0$ such that the Lyapunov condition holds:
\[
\lim_{N \to \infty} \frac{1}{s_n^{2+\delta}} \sum_{t=1}^{t_n} 
\EE\!\left[\, \lvert Z_{nt} - \mu_{nk}\rvert^{\,2+\delta} \right]
= 0 .
\]
Then
\[
\frac{\sum_{t=1}^{t_n} (Z_{nt} - \mu_{nt})}{s_n}
\to \mathcal{N}(0,1).
\]
That is, the normalized sum converges in distribution to a standard normal random variable.
\end{theorem}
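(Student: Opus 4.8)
The plan is to derive this result from the Lindeberg--Feller central limit theorem, whose hypotheses are implied by the Lyapunov condition. Without loss of generality I would first center, replacing each $Z_{nt}$ by $Z_{nt}-\mu_{nt}$, and then normalize, setting $X_{nt} = (Z_{nt}-\mu_{nt})/s_n$ so that $\EE[X_{nt}]=0$, $\Var(X_{nt}) = \sigma_{nt}^2/s_n^2 =: v_{nt}$, and $\sum_{t=1}^{t_n} v_{nt} = 1$. In this notation the claim becomes $\sum_t X_{nt} \to \mathcal{N}(0,1)$, and the normalized Lyapunov hypothesis reads $\Lambda_n := \sum_t \EE[|X_{nt}|^{2+\delta}] = s_n^{-(2+\delta)}\sum_t \EE[|Z_{nt}-\mu_{nt}|^{2+\delta}] \to 0$.

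The first key step is to show that Lyapunov implies Lindeberg: for every $\epsilon>0$, $L_n(\epsilon) := \sum_t \EE[X_{nt}^2 \mathbf{1}\{|X_{nt}|>\epsilon\}] \to 0$. This follows from the pointwise bound $X_{nt}^2 \mathbf{1}\{|X_{nt}|>\epsilon\} \le \epsilon^{-\delta} |X_{nt}|^{2+\delta}$, which after summing gives $L_n(\epsilon) \le \epsilon^{-\delta}\Lambda_n \to 0$. The same decomposition yields uniform asymptotic negligibility, since $v_{nt} \le \epsilon^2 + \EE[X_{nt}^2\mathbf{1}\{|X_{nt}|>\epsilon\}]$ implies $\max_t v_{nt} \le \epsilon^2 + L_n(\epsilon)$, so $\limsup_n \max_t v_{nt} \le \epsilon^2$ for every $\epsilon$, hence $\max_t v_{nt}\to 0$.

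The second step, which is the technical heart, is to prove convergence of characteristic functions. Writing $\phi_{nt}(u) = \EE[e^{iuX_{nt}}]$ and $\psi_n(u) = \prod_t \phi_{nt}(u)$, I would compare $\psi_n$ to $\prod_t e^{-v_{nt}u^2/2} = e^{-u^2/2}$ using the telescoping bound $|\prod_t z_t - \prod_t w_t| \le \sum_t |z_t - w_t|$, valid for complex numbers with $|z_t|,|w_t|\le 1$ (both $|\phi_{nt}(u)|\le 1$ and $|e^{-v_{nt}u^2/2}|\le 1$ hold). Each summand splits as $|\phi_{nt}(u) - (1 - v_{nt}u^2/2)| + |(1 - v_{nt}u^2/2) - e^{-v_{nt}u^2/2}|$; the latter is $O(v_{nt}^2 u^4)$ and sums to $O(u^4 \max_t v_{nt}) \to 0$. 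For the former, since $\EE[X_{nt}]=0$ I would write it as $|\EE[e^{iuX_{nt}} - 1 - iuX_{nt} + u^2 X_{nt}^2/2]|$ and apply the remainder bound $|e^{ix}-1-ix+x^2/2| \le \min(|x|^3/6,\, x^2)$, splitting the expectation at $|X_{nt}|\le\epsilon$ and $|X_{nt}|>\epsilon$. This bounds the total by $\tfrac{|u|^3\epsilon}{6}\sum_t \EE[X_{nt}^2] + u^2 L_n(\epsilon) = \tfrac{|u|^3\epsilon}{6} + u^2 L_n(\epsilon)$.

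Taking $n\to\infty$ kills $L_n(\epsilon)$ by the first step, so $\limsup_n |\psi_n(u) - e^{-u^2/2}| \le |u|^3\epsilon/6$ for every $\epsilon>0$; letting $\epsilon\to 0$ gives $\psi_n(u)\to e^{-u^2/2}$ pointwise, and L\'evy's continuity theorem yields convergence in distribution to $\mathcal{N}(0,1)$. I expect the main obstacle to be the uniform control of the Taylor remainder: the two-regime split at level $\epsilon$ must simultaneously exploit third-moment smallness on $\{|X_{nt}|\le\epsilon\}$ and the Lindeberg tail on $\{|X_{nt}|>\epsilon\}$, with the order of limits ($n\to\infty$ before $\epsilon\to 0$) being essential.
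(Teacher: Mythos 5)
The paper does not prove this statement: it is a classical result imported verbatim with a citation to Billingsley (1995, Theorem 27.3), so there is no in-paper argument to compare against. Your proof is correct and is essentially the standard textbook argument from that source --- reduce to the Lindeberg condition via the bound $X_{nt}^2\mathbf{1}\{|X_{nt}|>\epsilon\}\le \epsilon^{-\delta}|X_{nt}|^{2+\delta}$, then run the characteristic-function comparison with the telescoping product inequality, the Taylor remainder bound $\min(|x|^3/6,\,x^2)$ split at level $\epsilon$, and L\'evy continuity, taking $n\to\infty$ before $\epsilon\to 0$. The two-regime split is the right choice since the more direct bound $|e^{ix}-1-ix+x^2/2|\le C|x|^{2+\delta}$ would only cover $\delta\le 1$; your argument handles all $\delta>0$. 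No gaps.
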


We now prove the following lemma, which involves verifying the Lyapunov condition for entries of $\sqrt{k_N} \Psi^{N, k_N}\left(Y - \EE[Y|S]\right)$.

\begin{lemma}\label{lem:expectation-asymptotic-normality}
    Let $(S_n)_{n=1}^N$ be a sequence of points in $\spatialdomain$ such that infill asymptotics holds with respect to $(\Sstar_m)_{m=1}^M$. Suppose that $k_N$ is chosen according to \cref{thm:point-estimate-consistent-infill}. Suppose \cref{assum:cov-fixed-fns,assum:test-train-dgp,assum:lipschitz,assum:bounded-variance,assum:beta-mle-exists-strictly-concave}
    Then, 
    \begin{align}
        \lim_{N \to \infty} \sqrt{k_N} \Psi^{N, k_N}\left(Y - \EE[Y|S]\right) = \mathcal{N}(0, \Lambda^{\star})
    \end{align}
    where $\Lambda^{\star}$ is a diagonal matrix with $\Lambda^{\star}_{mm} = \Var[\Ystar_m | \Sstar_m]$ for $1 \leq m \leq M$.
\end{lemma}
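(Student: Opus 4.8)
The plan is to establish the claim coordinatewise through the Cram\'er--Wold device, exploiting that for large $N$ the nearest-neighbor sets are disjoint (\cref{lem:disjoint-neighborhoods}), so that the relevant partial sums reduce to sums of \emph{independent}, mean-zero terms to which the Lyapunov central limit theorem (\cref{thm:lyapunov-clt}) applies. Fix an arbitrary $c=(c_1,\dots,c_M)\in\RR^M$ and consider the scalar $W_N = c^{\transpose}\sqrt{k_N}\,\Psi^{N,k_N}(Y-\EE[Y|S])$. By \cref{lem:disjoint-neighborhoods} there is an $N_0$ such that for $N\geq N_0$ each training index $n$ is a $k_N$ nearest-neighbor of at most one target; letting $m(n)$ denote that (unique, when it exists) target index, I can write $W_N = \frac{1}{\sqrt{k_N}}\sum_{n=1}^N c_{m(n)}\big(Y_n-\EE[Y_n|S_n]\big)\,1\{S_n \text{ is a } k_N\text{ nearest-neighbor of } \Sstar_{m(n)}\}$, with coefficient $0$ for indices that are nobody's neighbor. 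Under \cref{assum:test-train-dgp} the summands are independent with mean zero.

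Next I would set up a triangular array \emph{without} the $1/\sqrt{k_N}$ prefactor, $Z_{Nn} = c_{m(n)}(Y_n-\EE[Y_n|S_n])\,1\{S_n \text{ is a } k_N \text{ nearest-neighbor of } \Sstar_{m(n)}\}$, which has exactly $k_N$ nonzero entries per target. Its variance sum is $s_N^2 = \sum_{n} c_{m(n)}^2\,\rho^2(S_n)\,1\{\cdots\}$, and since the nearest-neighbor sets are disjoint this is $\sum_{m=1}^M c_m^2\sum_{n\in \mathrm{NN}(\Sstar_m)}\rho^2(S_n)$. Because \cref{prop:kn-sequence-props} gives $R_{N,k_N}\to 0$, every contributing $S_n$ converges to its target, so continuity of $\rho^2$ (\cref{assum:variance-continuous}) yields $k_N^{-1}s_N^2 \to \sigma_c^2 := \sum_{m=1}^M c_m^2\,\Var[\Ystar_m|\Sstar_m]$. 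If $\sigma_c^2=0$, then $\Var(W_N)\to 0$ and $W_N\to 0$ in probability by Chebyshev, matching the degenerate normal; otherwise $s_N^2\to\infty$ and $s_N>0$ eventually, as required to invoke \cref{thm:lyapunov-clt}.

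The hard part will be verifying the Lyapunov condition and reconciling the normalization, since $s_N^2\to\infty$ forces the unscaled array while the target variance $\sigma_c^2$ is finite. Taking $\delta=2$, I would bound $s_N^{-4}\sum_n \EE[|Z_{Nn}|^{4}]$: the numerator equals $\sum_n c_{m(n)}^4\,\EE[(Y_n-\EE[Y_n|S_n])^4]\,1\{\cdots\}$, which \cref{assum:bounded-4th-moment} (together with \cref{assum:lipschitz}, so the relevant means stay bounded near each target) controls by $(\max_m c_m^4)\,C\,k_N$, while $s_N^4 \gtrsim (k_N\sigma_c^2)^2$; hence the ratio is $O(1/k_N)\to 0$ because $k_N\to\infty$ (\cref{prop:kn-sequence-props}, property 1). \Cref{thm:lyapunov-clt} then gives $s_N^{-1}\sum_n Z_{Nn}\to\mathcal{N}(0,1)$.

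To finish, I would undo the rescaling: $W_N = (s_N/\sqrt{k_N})\,\big(s_N^{-1}\sum_n Z_{Nn}\big)$ with $s_N/\sqrt{k_N}\to\sigma_c$, so Slutsky's lemma \citep[Lemma 2.8]{Vaart_1998} yields $W_N\to\mathcal{N}(0,\sigma_c^2)$. Since $\sigma_c^2 = c^{\transpose}\Lambda^{\star}c$ for every $c\in\RR^M$, the Cram\'er--Wold device gives $\sqrt{k_N}\,\Psi^{N,k_N}(Y-\EE[Y|S])\to\mathcal{N}(0,\Lambda^{\star})$, with $\Lambda^{\star}$ diagonal and $\Lambda^{\star}_{mm}=\Var[\Ystar_m|\Sstar_m]$; the off-diagonal vanishing is exactly the disjointness of \cref{lem:disjoint-neighborhoods}. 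I anticipate that the only genuinely delicate points are the Lyapunov moment bound (which is where the bounded fourth central moment enters) and the bookkeeping that identifies the limiting variance with $\sigma_c^2$ via continuity of $\rho^2$; everything else is routine application of the cited results.
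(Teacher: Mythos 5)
Your proof is correct and rests on the same two pillars as the paper's: disjointness of the nearest-neighbor sets for large $N$ (\cref{lem:disjoint-neighborhoods}) and verification of the Lyapunov condition with $\delta=2$, which is where the bounded fourth central moment enters. (Note that, exactly like the paper's own proof, you need \cref{assum:variance-continuous} to identify the limiting variance and \cref{assum:bounded-4th-moment} for the Lyapunov ratio, even though neither appears in the lemma's stated hypotheses; this is a defect of the statement, not of your argument.) The differences are in the packaging, and both work in your favor. First, you use the Cram\'er--Wold device, whereas the paper argues coordinatewise and uses independence of the coordinates for $N \geq N_0$ to pass from marginal to joint normality; the two routes are equivalent here, and the off-diagonal structure of $\Lambda^{\star}$ comes from disjointness either way. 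Second, and more substantively, you build the triangular array without the $1/\sqrt{k_N}$ prefactor so that $s_N^2 \asymp k_N\sigma_c^2 \to \infty$, and then undo the rescaling by Slutsky. This makes your invocation of \cref{thm:lyapunov-clt} literally consistent with its stated hypothesis $s_n^2 \to \infty$; the paper instead applies the theorem directly to $R^N_m$, whose total variance converges to the finite constant $\Var[\Ystar_m \mid \Sstar_m]$, and its displayed Lyapunov ratio normalizes by $\Var[R^N_m]^4$ where $s_N^{2+\delta} = \Var[R^N_m]^2$ is the correct power for $\delta=2$ --- harmless in the end since both quantities tend to a positive constant, but your bookkeeping is the cleaner of the two. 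Your treatment of the degenerate case $\sigma_c^2 = 0$ via Chebyshev matches the paper's handling of $\Var[\Ystar_m\mid\Sstar_m]=0$. The only trivial slip is that the fourth-moment numerator has $Mk_N$ nonzero terms rather than $k_N$, which changes nothing in the $O(1/k_N)$ conclusion.
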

\begin{proof}
    By \cref{lem:disjoint-neighborhoods}, for $N$ sufficiently large, the rows of $\Psi^{N, k_N}$ are disjoint. Therefore, the entries of $\Psi^{N, k_N}\left(Y - \EE[Y|S]\right)$ are independent for sufficiently large $N$, and so it suffices to show that each entry of the vector $\sqrt{k_N} \Psi^{N, k_N} (Y - \EE[Y|S])$ converges in distribution to a univariate normal random variable.
    
    Let $R^N_{m} = (\sqrt{k_N}\Psi^{N,k_N}(Y - \EE[Y|S]))_m$ be the $m$th entry of the vector $\sqrt{k_N} \Psi^{N, k_N} (Y - \EE[Y|S])$, and define $r^N_{nm} = \sqrt{k_N} \Psi^{N, k_N}_{nm} (Y - \EE[Y|S])$, so that $R^N_{m} = \sum_{n=1}^N r^N_{nm}$. 
    The variance of $R^N_{m}$ is 
    \begin{align}
        \Var[R^N_{m}] &= k_N \sum_{n=1}^N (\Psi^{N, k_N}_{mn})^2 \Var[Y_n | S_n] = \sum_{n=1}^N \Psi^{N, k_N}_{mn} \Var[Y_n | S_n].
    \end{align}
    \Cref{assum:variance-continuous} and \cref{prop:kn-sequence-props} imply
    \begin{align}
        \sum_{n=1}^N \Psi^{N, k_N}_{mn} \Var[Y_n | S_n] \to \Var[\Ystar_m|\Sstar_m].
    \end{align}
    If $\Var[\Ystar_m|\Sstar_m] = 0$, then $\Var[R^N_{m}]\to 0$ and so $R^N_{m} \to 0$ in distribution, as claimed in this case. Otherwise, we consider the limit
    \begin{align}
        \lim_{N \to \infty}& \frac{1}{\Var[R^N_{m}]^4}\sum_{n=1}^N \EE[|r^N_{nm}|^4] \\
        &=     \lim_{N \to \infty} \frac{1}{(\sum_{n=1}^N \Psi^{N, k_N}_{mn} \Var[Y_n | S_n])^4}\sum_{n=1}^N \EE[|\sqrt{k_N} \Psi^{N, k_N}_{nm} (Y - \EE[Y|S])|^4] \\
        &=     \lim_{N \to \infty} \frac{1}{k_N^2(\sum_{n=1}^N \Psi^{N, k_N}_{mn} \Var[Y_n | S_n])^4}\sum_{n=1}^N \Psi^{N, k_N}_{nm}\EE[|(Y_n - \EE[Y_n|S])|^4]
    \end{align}
    \Cref{assum:bounded-4th-moment} implies $\sum_{n=1}^N \Psi^{N, k_N}_{nm}\EE[|(Y_n - \EE[Y_n|S])|^4] \leq C$, and since $\sum_{n=1}^N \Psi^{N, k_N}_{mn} \Var[Y_n | S_n] \to \Var[\Ystar_m|\Sstar_m] \neq 0$ the Lyapunov condition holds. 
\end{proof}

\begin{prop}
\label{prop:asymptotic-normality-conditional-expectation}
    Let $(S_n)_{n=1}^N$ be a sequence of points in $\spatialdomain$ such that infill asymptotics holds with respect to $(\Sstar_m)_{m=1}^M$. Suppose that $k_N$ is chosen according to \cref{thm:point-estimate-consistent-infill} with $a_t = \frac{1}{\sqrt{t}}$. Suppose \cref{assum:beta-mle-exists-strictly-concave,assum:lipschitz,assum:cov-fixed-fns,assum:test-train-dgp,assum:bounded-variance} Then,
    \begin{align}
        \sqrt{k_N}(\Psi^{N, k_N} Y - \EE[\Ystar | \Sstar]) \to \mathcal{N}(B, \Lambda^{\star}),
    \end{align}
    for $B \in \RR^{M}$ with $B_m = \sqrt{k_N}\left(\sum_{n=1}^N\Psi^{N, k_N}f(S_n) - f(\Sstar_m)\right)$ and $\Lambda^{\star}$ is a diagonal matrix with $\Lambda^{\star}_{mm} = \Var[\Ystar_m | \Sstar_m]$ for $1 \leq m \leq M$.
\end{prop}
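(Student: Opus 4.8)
The plan is to separate the quantity of interest into a centered stochastic term, to which the Lyapunov central limit theorem already applies, and a deterministic bias term, which I will identify with $B$. Concretely, I would first write the decomposition
\begin{align}
\sqrt{k_N}\left(\Psi^{N,k_N}Y - \EE[\Ystar|\Sstar]\right) = \sqrt{k_N}\,\Psi^{N,k_N}\left(Y - \EE[Y|S]\right) + \sqrt{k_N}\left(\Psi^{N,k_N}\EE[Y|S] - \EE[\Ystar|\Sstar]\right).
\end{align}
The first summand is centered, while the second is deterministic given the (fixed) training and target locations. Using \cref{assum:test-train-dgp}, which gives $\EE[Y_n|S_n] = f(S_n)$ and $\EE[\Ystar_m|\Sstar_m] = f(\Sstar_m)$, the $m$th coordinate of the second summand is exactly $\sqrt{k_N}\big(\sum_{n=1}^N \Psi^{N,k_N}_{mn} f(S_n) - f(\Sstar_m)\big) = B_m$, the claimed bias.

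Next I would invoke \cref{lem:expectation-asymptotic-normality}, which under the same assumptions and the same adaptive choice of $k_N$ establishes $\sqrt{k_N}\,\Psi^{N,k_N}(Y - \EE[Y|S]) \to \mathcal{N}(0, \Lambda^\star)$. This is the substantive ingredient: it is where the Lyapunov condition is verified, relying on the bounded fourth moment and on the disjointness of neighborhoods (\cref{lem:disjoint-neighborhoods}) to reduce the multivariate claim to independent, marginally asymptotically normal coordinates. Adding the deterministic vector $B$ then yields $\sqrt{k_N}(\Psi^{N,k_N}Y - \EE[\Ystar|\Sstar]) - B \to \mathcal{N}(0, \Lambda^\star)$, which is precisely what convergence to $\mathcal{N}(B, \Lambda^\star)$ means here.

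The main subtlety to handle carefully is that $B$ itself depends on $N$, so I must be explicit that ``$\to \mathcal{N}(B, \Lambda^\star)$'' should be read as the bias-corrected sequence converging to the fixed distribution $\mathcal{N}(0, \Lambda^\star)$, with $B = B_N$ carried along. This is also where the specific choice $a_t = 1/\sqrt{t}$ enters: combining \cref{prop:kn-sequence-props} (so that $R_{N,k_N} = O(1/\sqrt{k_N})$), the bound of the average neighbor distance by the maximum from \cref{cor:average-distance-to-targets-tends-to-zero}, and the Lipschitz bound from \cref{assum:lipschitz} gives $|B_m| \le \sqrt{k_N}\,L\,R_{N,k_N} = O(1)$, so the mean stays bounded and the limiting family is non-degenerate. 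I expect this bookkeeping around the $N$-dependent but bounded mean to be the only delicate point, since the heavy lifting of verifying asymptotic normality of the centered term is already carried out in \cref{lem:expectation-asymptotic-normality}.
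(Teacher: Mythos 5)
Your proposal is correct and matches the paper's proof essentially step for step: the same add-and-subtract decomposition into the centered term $\sqrt{k_N}\,\Psi^{N,k_N}(Y-\EE[Y|S])$ handled by \cref{lem:expectation-asymptotic-normality} plus the deterministic bias $B$, and the same use of \cref{assum:lipschitz} with $a_t = 1/\sqrt{t}$ to show $|B_m| \leq L k_N a_{k_N-1}/\sqrt{k_N} = O(1)$. Your explicit remark that the convergence must be read as the bias-corrected sequence converging to $\mathcal{N}(0,\Lambda^{\star})$ with an $N$-dependent but bounded mean is a point the paper leaves implicit, but it is the same argument.
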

\begin{proof}
    Adding zero,
    \begin{align}
        \sqrt{k_N}(\Psi^{N, k_N} Y - \EE[\Ystar | \Sstar]) & =
        \sqrt{k_N}(\Psi^{N, k_N}Y -\EE[Y|S] ) + \sqrt{k_N}(\Psi^{N, k_N}(\EE[Y|S] -\EE[\Ystar | \Sstar]).
    \end{align}
    \Cref{lem:expectation-asymptotic-normality} implies that $\sqrt{k_N}(\Psi^{N, k_N}Y -\EE[Y|S] ) \to \mathcal{N}(0,\Lambda^{\star}$. Considering the second term,

    For all $k_N > 2$,
    \begin{align}
        \left|\sqrt{k_N}\left(\sum_{n=1}^N\Psi^{N, k_N}f(S_n) - f(\Sstar_m)\right) \right|
        & \leq L\left(\sum_{n=1}^N\Psi^{N, k_N}d(S_n, \Sstar_m)\right) \\
        & \leq L\left(\sum_{n=1}^N\Psi^{N, k_N}d(S_n, \Sstar_m)\right)\\
        & \leq \frac{Lk_N}{\sqrt{k_N}}a_{k_N -1}.
    \end{align} 
        Because $a_t = \frac{1}{\sqrt{t}}$, $\frac{Lk_N}{\sqrt{k_N}}a_{k_N -1} \leq 2L$.  This implies that this bias term is $O(1)$.
\end{proof}

\subsection{Proof of Asymptotic Validity of Confidence Intervals}\label{app:proof_confidence_interval_full}
We now prove that the confidence intervals defined in \cref{sec:point-estimation} are asymptotically valid. We first show that the confidence intervals, with linearization around the true parameter, are asymptotically valid. A key lemma along the way is \citet[Theorem 3.1]{Vaart_1998}, which is essentially the conclusion of the delta method. We recall this theorem here for convenience.
\begin{lemma}[Delta Method]\label{lem:delta-method}
    Let $\phi$ be a map defined on a subset $D \subset \RR^{M} \to \RR^{P}$ that is differentiable at $\theta$. Let $T_n$ be random vectors taking values in $D$. If $r_N(T_N - \theta) \to T$ for  $(r_N)_{N=1}^\infty$ a sequence such that $r_n \to \infty$, then $r_N(\phi(T_N) - \phi(\theta)) \to \phi'_{\theta}(T)$ in distribution.
\end{lemma}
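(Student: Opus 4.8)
The plan is to prove the delta method by the classical linearization-plus-Slutsky argument, using only the hypotheses that $\phi$ is differentiable at $\theta$, that $r_N(T_N - \theta) \to T$ in distribution, and that $r_N \to \infty$. First I would introduce the remainder map $g(h) = \phi(\theta + h) - \phi(\theta) - \phi'_\theta(h)$ and, exploiting linearity of $\phi'_\theta$, record the exact decomposition
\begin{align}
    r_N\big(\phi(T_N) - \phi(\theta)\big) = \phi'_\theta\big(r_N(T_N - \theta)\big) + r_N\, g(T_N - \theta).
\end{align}
Differentiability guarantees that $\|g(h)\|/\|h\| \to 0$ as $h \to 0$ (with the convention $g(0) = 0$), and this is the only property of $g$ the argument requires.

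Next I would treat the two terms separately. For the linear term, $\phi'_\theta$ is a continuous linear map, so the continuous mapping theorem applied to the hypothesis $r_N(T_N - \theta) \to T$ in distribution yields $\phi'_\theta(r_N(T_N - \theta)) \to \phi'_\theta(T)$ in distribution directly. For the remainder I would first note that convergence in distribution makes $r_N(T_N - \theta)$ bounded in probability, i.e.\ $O_P(1)$; since $r_N \to \infty$, this forces $T_N - \theta = o_P(1)$, so $T_N \to \theta$ in probability, which is what lets me invoke the local control on $g$.

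The main obstacle is showing $r_N\, g(T_N - \theta) = o_P(1)$, since the remainder is only controlled near $0$ while $T_N$ converges merely in distribution. I would make this rigorous with an $\epsilon$-$\delta$ argument: fix $\epsilon > 0$ and choose $\delta > 0$ so that $\|g(h)\| \le \epsilon \|h\|$ whenever $\|h\| < \delta$. On the event $\{\|T_N - \theta\| < \delta\}$ one then has $\|r_N\, g(T_N - \theta)\| \le \epsilon\, \|r_N(T_N - \theta)\|$, and since $\|r_N(T_N - \theta)\|$ is $O_P(1)$ while $\PP(\|T_N - \theta\| \ge \delta) \to 0$, the remainder is made arbitrarily small in probability; letting $\epsilon \downarrow 0$ gives $r_N\, g(T_N - \theta) \to 0$ in probability. (Equivalently, one can define $h \mapsto g(h)/\|h\|$ to be continuous at $0$ and combine the continuous mapping theorem with $T_N \to \theta$ in probability.)

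Finally I would combine the pieces with Slutsky's lemma: the linear term converges in distribution to $\phi'_\theta(T)$ and the remainder converges to $0$ in probability, so their sum $r_N(\phi(T_N) - \phi(\theta))$ converges in distribution to $\phi'_\theta(T)$, which is the claim. The only delicate point throughout is the remainder step above; everything else is a direct appeal to the continuous mapping theorem and Slutsky's lemma.
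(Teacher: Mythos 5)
Your proof is correct. Note that the paper does not actually prove this lemma: it recalls it as Theorem 3.1 of van der Vaart (1998) and uses it as a black box. Your argument --- the exact decomposition $r_N(\phi(T_N)-\phi(\theta)) = \phi'_\theta(r_N(T_N-\theta)) + r_N\,g(T_N-\theta)$, tightness of $r_N(T_N-\theta)$ forcing $T_N-\theta = o_P(1)$, the $\epsilon$--$\delta$ control of the remainder, and Slutsky --- is precisely the standard proof given in that reference, so it matches the paper's (cited) source rather than offering a different route.
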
  

We apply this lemma together with \cref{lem:expectation-asymptotic-normality} to show that the point estimate $\pointestimate$ is asymptotically normal. After that what will remain is to use consistency of the variance estimate to show that using the estimated variance in place of the true variance yields asymptotically valid confidence intervals, and to use consistency of the point estimate to show that linearization around the point estimate instead of the true parameter yields asymptotically valid confidence intervals.

\begin{theorem}[Asymptotic Normality of Point Estimate]\label{thm:point-estimate-asymptotic-normality}
    Let $(S_n)_{n=1}^N$ be a sequence of points in $\spatialdomain$ such that infill asymptotics holds with respect to $(\Sstar_m)_{m=1}^M$. Suppose \cref{assum:cov-fixed-fns,assum:test-train-dgp,assum:beta-mle-exists-strictly-concave,assum:lipschitz,assum:bounded-variance,assum:bounded-4th-moment,assum:variance-continuous}. Let $\pointestimate$ be the point estimate defined in \cref{eqn:nn-max-likelihood-point-estimate}. Then,
    \begin{align}
        \sqrt{k_N}(\pointestimate - \betamle) \to \mathcal{N}(\tau'(C^{\star})B, \tau'(C^{\star})\Lambda^{\star}\tau(C^{\star})^{\transpose}),
    \end{align}
    where $B$ and $\Lambda^{\star}$ are as in \cref{prop:asymptotic-normality-conditional-expectation}
\end{theorem}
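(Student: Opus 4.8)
The proof is a linearization (delta-method) argument. Write $\theta^{\star} = \EE[\Ystar|\Sstar]$ and $T_N = \Psi^{N,k_N}Y$, so that $\pointestimate = \tau(T_N)$ and $\betamle = \tau(\theta^{\star})$. First I would invoke consistency: the hypotheses of \cref{lem:empirical-conditional-expectation-converges} hold for the adaptive $k_N$ by \cref{prop:kn-sequence-props,cor:average-distance-to-targets-tends-to-zero}, so $T_N \to \theta^{\star}$ in probability; hence with probability tending to one $T_N$ lies in the open neighborhood of $\theta^{\star}$ on which \cref{lem:point-estimate-continuous} guarantees $\tau$ is well defined and continuously differentiable. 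Continuous differentiability then yields, on that event, the first-order expansion $\tau(T_N) - \tau(\theta^{\star}) = \tau'(\theta^{\star})(T_N - \theta^{\star}) + R_N$ with remainder controlled by $\|R_N\|_2 \le \varepsilon\,\|T_N - \theta^{\star}\|_2$ for any fixed $\varepsilon>0$, once $T_N$ is within the corresponding radius of $\theta^{\star}$. Here $\tau'(\theta^{\star})$ is the Jacobian of $\tau$ at $\EE[\Ystar|\Sstar]$, which is the matrix written $\tau'(C^{\star})$ in the statement.

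Next I would decompose $\sqrt{k_N}(T_N - \theta^{\star}) = \sqrt{k_N}\Psi^{N,k_N}(Y - \EE[Y|S]) + B$, where $B$ is the bias of \cref{prop:asymptotic-normality-conditional-expectation}. The centered term converges in distribution to $\mathcal{N}(0,\Lambda^{\star})$ by \cref{lem:expectation-asymptotic-normality}, and $B$ is deterministic and bounded ($|B_m|\le 2L$ for $k_N>2$, from the proof of \cref{prop:asymptotic-normality-conditional-expectation}), so $\sqrt{k_N}(T_N - \theta^{\star})$ is bounded in probability. Multiplying the expansion by $\sqrt{k_N}$ and using $\|\sqrt{k_N}R_N\|_2 \le \varepsilon\,\sqrt{k_N}\|T_N-\theta^{\star}\|_2$ for every $\varepsilon$ shows $\sqrt{k_N}R_N \to 0$ in probability, so that
\begin{align}
    \sqrt{k_N}(\pointestimate - \betamle) = \tau'(\theta^{\star})\big(\sqrt{k_N}\Psi^{N,k_N}(Y - \EE[Y|S]) + B\big) + o_{\PP}(1).
\end{align}
Slutsky's lemma, together with the fact that the linear map $\tau'(\theta^{\star})$ carries $\mathcal{N}(0,\Lambda^{\star})$ to $\mathcal{N}(0,\tau'(\theta^{\star})\Lambda^{\star}\tau'(\theta^{\star})^{\transpose})$ and the bounded deterministic $B$ to $\tau'(\theta^{\star})B$, then gives the claimed limit $\mathcal{N}(\tau'(\theta^{\star})B,\,\tau'(\theta^{\star})\Lambda^{\star}\tau'(\theta^{\star})^{\transpose})$.

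The main obstacle is the bias: because $B$ is only $O(1)$ and need not converge, \cref{lem:delta-method} cannot be applied as a black box to $\sqrt{k_N}(T_N - \theta^{\star})$, since that quantity has no fixed limiting distribution. The resolution is to linearize only the genuinely convergent, centered term of \cref{lem:expectation-asymptotic-normality} and to transport the non-vanishing but bounded bias through the (linear) derivative $\tau'(\theta^{\star})$ by hand. A related subtlety is that, precisely because the bias prevents $\sqrt{k_N}(T_N - \theta^{\star})$ from vanishing, one must verify directly that the scaled Taylor remainder is negligible; this follows from tightness of $\sqrt{k_N}(T_N-\theta^{\star})$ combined with the $o(\|\cdot\|_2)$ control on $R_N$ afforded by continuous differentiability of $\tau$.
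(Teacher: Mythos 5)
Your proposal is correct and follows the same high-level strategy as the paper --- linearize $\tau$ around $\EE[\Ystar|\Sstar]$ and push the limit of $\sqrt{k_N}(\Psi^{N,k_N}Y - \EE[\Ystar|\Sstar])$ from \cref{prop:asymptotic-normality-conditional-expectation} through the Jacobian $\tau'(C^{\star})$. The difference is in rigor rather than in route. The paper's proof invokes \cref{lem:delta-method} as a black box; but the hypothesis of that lemma is that $r_N(T_N-\theta)$ converges in distribution to a \emph{fixed} limit $T$, whereas here the ``limit'' $\mathcal{N}(B,\Lambda^{\star})$ has a mean $B$ that depends on $N$ (each $B_m$ carries an explicit factor of $\sqrt{k_N}$) and is only shown to be $O(1)$, not convergent. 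So the lemma does not literally apply, and the paper's one-line appeal to it papers over exactly the issue you identify. Your by-hand argument --- establish consistency to land in the neighborhood where \cref{lem:point-estimate-continuous} gives differentiability, Taylor-expand with an $o(\|T_N-\theta^{\star}\|_2)$ remainder, use tightness of $\sqrt{k_N}(T_N-\theta^{\star})$ (centered CLT term plus bounded deterministic bias) to kill the scaled remainder, then apply Slutsky to the centered term and carry the bias through the linear map separately --- is the standard proof of the delta method adapted to a drifting mean, and it is the correct way to make the paper's claim precise (i.e., as $\sqrt{k_N}(\pointestimate-\betamle) - \tau'(C^{\star})B \to \mathcal{N}(0,\tau'(C^{\star})\Lambda^{\star}\tau'(C^{\star})^{\transpose})$). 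In short: same decomposition and same key ingredients (\cref{lem:expectation-asymptotic-normality}, \cref{lem:point-estimate-continuous}, \cref{prop:asymptotic-normality-conditional-expectation}), but your version closes a genuine gap in the paper's own write-up.
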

\begin{proof}
    We apply \cref{lem:delta-method} with $\phi = \tau$ and $T_N = \Psi^{N, k_N} Y$. The point estimate $\pointestimate$ is given by $\tau(\Psi^{N, k_N} Y)$. The true parameter $\betamle$ is given by $\tau(\EE[\Ystar|\Sstar])$. \Cref{prop:asymptotic-normality-conditional-expectation} implies
    \begin{align}
        \sqrt{k_N}(\Psi^{N, k_N} Y - \EE[\Ystar|\Sstar]) \to \mathcal{N}(B, \Lambda^{\star}).
    \end{align}
    Therefore, we can apply the delta method (\cref{lem:delta-method}) to conclude that
    \begin{align}
        \sqrt{k_N}(\pointestimate - \betamle) &= \sqrt{k_N}(\tau(\Psi^{N, k_N} Y) - \tau(\EE[\Ystar|\Sstar])) \\
        &\to \mathcal{N}(\tau'(C^{\star})B, \tau'(C^{\star})\Lambda^{\star}\tau(C^{\star})^{\transpose}), 
    \end{align}
    as desired.
\end{proof}
From this, we conclude that,
\begin{align}
    \sqrt{k_N}(\pointestimate_p - \betamle_p) \to \mathcal{N}(e_p^{\transpose}\tau'(C^{\star})B, e_p^{\transpose}\tau'(C^{\star})\Lambda^{\star}\tau(C^{\star})^{\transpose}e_p).
\end{align}
where $e_p$ is the $p$th standard basis vector in $\RR^{P}$. Defining $\sigma^2_p = e_p^{\transpose}\tau'(C^{\star})\Lambda^{\star}\tau(C^{\star})^{\transpose}e_p$, we can construct the pivotal quantity
\begin{align}
    Z_p = \frac{\sqrt{k_N}(\pointestimate_p - \betamle_p - e_p^{\transpose}\tau'(C^{\star})B) }{\sqrt{\sigma^2_p}} \to \mathcal{N}(0, 1).
\end{align}
This gives us the corollary that, when linearized around the true parameter, the confidence intervals are asymptotically valid.
\begin{cor}[Asymptotic Validity of Confidence Intervals Linearized Around True Parameter]\label{cor:confidence-intervals-validity}
    Let $(S_n)_{n=1}^N$ be a sequence of points in $\spatialdomain$ such that infill asymptotics holds with respect to $(\Sstar_m)_{m=1}^M$. Suppose \cref{assum:cov-fixed-fns,assum:test-train-dgp,assum:beta-mle-exists-strictly-concave,assum:lipschitz,assum:bounded-variance,assum:bounded-4th-moment,assum:variance-continuous}. Let $\pointestimate$ be the point estimate defined in \cref{eqn:nn-max-likelihood-point-estimate}. Then for any $1 \leq p \leq P$, 
    \begin{align}
        \lim_{N \to \infty} \PP\left(\pointestimate_p - z_{\alpha/2} \sigma_p - \mu \leq \betamle_p \leq \pointestimate_p + z_{\alpha/2} \sigma_p -\mu \right) = 1 - \alpha
    \end{align}
    where $\mu_p = e_p^{\transpose}\tau'(C^{\star})B$ and $\sigma^2_p = e_p^{\transpose}\tau'(C^{\star})\Lambda^{\star}\tau(C^{\star})^{\transpose}e_p$ for all $1 \leq p \leq P$.
\end{cor}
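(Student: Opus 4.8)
The plan is to read the corollary directly off the coordinatewise asymptotic normality established in \cref{thm:point-estimate-asymptotic-normality}, by inverting the standard normal pivot $Z_p$ defined just above the statement. First I would project the multivariate limit onto the $p$th coordinate: since $v \mapsto e_p^{\transpose}v$ is a fixed linear map, \cref{thm:point-estimate-asymptotic-normality} gives that $\sqrt{k_N}(\pointestimate_p - \betamle_p)$, recentered by $\mu_p = e_p^{\transpose}\tau'(C^{\star})B$, converges in distribution to $\mathcal{N}(0,\sigma_p^2)$ with $\sigma_p^2 = e_p^{\transpose}\tau'(C^{\star})\Lambda^{\star}\tau'(C^{\star})^{\transpose}e_p$. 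Assuming $\sigma_p > 0$, dividing by $\sigma_p$ yields $Z_p \to \mathcal{N}(0,1)$.

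Second, I would rewrite the coverage event in terms of $Z_p$. The interval in the statement is symmetric about $\pointestimate_p - \mu_p$ with half-width $z_{\alpha/2}\sigma_p$, so a direct algebraic rearrangement (matching the $\sqrt{k_N}$ scaling carried by the standard error and bias) identifies the event $\{\pointestimate_p - z_{\alpha/2}\sigma_p - \mu_p \leq \betamle_p \leq \pointestimate_p + z_{\alpha/2}\sigma_p - \mu_p\}$ with $\{|Z_p| \leq z_{\alpha/2}\}$. Because $Z_p \to \mathcal{N}(0,1)$ and $\pm z_{\alpha/2}$ are continuity points of the standard normal CDF, the portmanteau lemma gives
\begin{align}
    \lim_{N\to\infty}\PP\!\left(|Z_p| \leq z_{\alpha/2}\right) = \PP\!\left(|Z| \leq z_{\alpha/2}\right) = 1 - \alpha, \qquad Z \sim \mathcal{N}(0,1),
\end{align}
where the last equality is the definition of $z_{\alpha/2}$ as the $(1-\alpha/2)$ quantile. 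This is the claimed coverage.

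The main obstacle is conceptual rather than computational: the ``mean'' $\mu_p$ of the limiting Gaussian is itself an $N$-dependent, bounded ($O(1)$) sequence, since $B$ carries a factor of $\sqrt{k_N}$, so we are not converging to a single fixed law. The clean way to handle this is the decomposition already used in \cref{prop:asymptotic-normality-conditional-expectation}: because the training locations are a fixed deterministic sequence, $B$ and hence $\mu_p$ are deterministic shifts, and only the stochastic part $\sqrt{k_N}\Psi^{N,k_N}(Y - \EE[Y|S])$ requires a central limit theorem and converges to the fixed centered Gaussian $\mathcal{N}(0,\Lambda^{\star})$; we subtract the deterministic shift off before studentizing, and the delta-method remainder is negligible because $\sqrt{k_N}(\Psi^{N,k_N}Y - \EE[\Ystar|\Sstar])$ is bounded in probability. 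A secondary edge case is $\sigma_p = 0$ (no target-response variance in the relevant directions), where the pivot is undefined; there the recentered estimate concentrates at $\betamle_p$ and the interval degenerates in the limit, so coverage holds trivially in the conservative sense and does not affect the conclusion.
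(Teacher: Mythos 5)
Your proposal is correct and follows essentially the same route as the paper: the paper's own ``proof'' of this corollary is exactly the construction of the pivot $Z_p$ from the coordinate projection of \cref{thm:point-estimate-asymptotic-normality} followed by inverting $\{|Z_p|\leq z_{\alpha/2}\}$. If anything, you are more careful than the paper, since you explicitly address the $\sqrt{k_N}$ bookkeeping in $B$ and the degenerate case $\sigma_p=0$, both of which the paper leaves implicit.
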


Slutsky's lemma implies that we can replace $\mu_p$ and $\sigma^2_p$ with consistent estimates of the true bias and variance. 
\begin{cor}[Asymptotic Validity of Confidence Intervals With Consistent Estimates]
 With the same assumptions as in \cref{cor:confidence-intervals-validity}, let $\pointestimate$ be the point estimate defined in \cref{eqn:nn-max-likelihood-point-estimate}. Then for any $1 \leq p \leq P$,   
\begin{align}
        \lim_{N \to \infty} \PP\left(\pointestimate_p - z_{\alpha/2} \hat{\sigma}_p - \hat{\mu} \leq \betamle_p \leq \pointestimate_p + z_{\alpha/2} \hat{\sigma}_p -\hat{\mu} \right) = 1 - \alpha
\end{align}
    where $\hat{\mu} = e_p^{\transpose}\tau'(\Psi^{N, k_N}Y)B$ and $\hat{\sigma}^2 = e_p^{\transpose}\tau'(\Psi^{N, k_N}Y)\Psi^{N, k_N}\Lambda^{N}(\Psi^{N, k_N})^{\transpose}\tau'(\Psi^{N, k_N}Y)^{\transpose}e_p$ for all $1 \leq p \leq P$.
\end{cor}

The remaining issue is that, we do not know $\hat{\mu}$, because it depends on the unknown function $f$. We can bound it using the same approach as in \citet{burt2025lipschitz}.
\begin{prop}[Bounding the bias, {\citet[Proposition 12]{burt2025lipschitz}}]\label{prop:bound-bias}
    \begin{align}
        |\hat{\mu}| \leq L \sup_{f \in \lipfnsone} \left\vert\sum_{m=1}^M w_m f(\Sstar_m) -  \sum_{n=1}^N v_n f(S_n)\right\vert,
    \end{align}
    where $w = \tau'(\Psi^{N,k_N}Y)^{\transpose} e_p$ and $v = \Psi^{N, k_N} w$ and $\lipfnsone$ is the set of $1$-Lipschitz functions. Moreover, this can be computed efficiently by reduction to a 1-Wasserstein distance between empirical measures.
\end{prop}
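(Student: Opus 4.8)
The plan is to reduce the bias bound to a deterministic statement about a linear functional of the unknown Lipschitz function $f$, and then invoke Kantorovich--Rubinstein duality; this follows the argument of \citet[Appendix B.2]{burt2025lipschitz}. The first step is purely algebraic. Recalling that $\hat\mu = e_p^{\transpose}\tau'(\Psi^{N,k_N}Y)B$, where (after replacing $\EE[\Ystar|\Sstar]$ by $\Psi^{N,k_N}Y$) the relevant bias vector has entries $B_m = \sum_{n=1}^N \Psi^{N,k_N}_{mn}f(S_n) - f(\Sstar_m)$, and setting $w = \tau'(\Psi^{N,k_N}Y)^{\transpose}e_p$, an exact expansion and interchange of summation gives
\begin{align}
\hat\mu = \sum_{m=1}^M w_m\Big(\sum_{n=1}^N \Psi^{N,k_N}_{mn}f(S_n) - f(\Sstar_m)\Big) = \sum_{n=1}^N v_n f(S_n) - \sum_{m=1}^M w_m f(\Sstar_m),
\end{align}
where $v_n = \sum_{m=1}^M \Psi^{N,k_N}_{mn}w_m$ collects the coefficient multiplying each $f(S_n)$. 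The only content here is that $\hat\mu$ is an exact linear functional of $f$ evaluated at the finitely many training and target locations.

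The second step exploits positive homogeneity in $f$. Because $f$ is $L$-Lipschitz (\cref{assum:lipschitz}), the rescaled function $f/L$ lies in $\lipfnsone$, so enlarging to a supremum over all $g\in\lipfnsone$ yields
\begin{align}
|\hat\mu| = \Big|\sum_{m=1}^M w_m f(\Sstar_m) - \sum_{n=1}^N v_n f(S_n)\Big| \le L\sup_{g\in\lipfnsone}\Big|\sum_{m=1}^M w_m g(\Sstar_m) - \sum_{n=1}^N v_n g(S_n)\Big|,
\end{align}
which is exactly the claimed bound. The crucial observation enabling the final reduction is a mass-balance identity: since each row of $\Psi^{N,k_N}$ has $k_N$ entries equal to $1/k_N$ and hence sums to $1$, we get $\sum_{n=1}^N v_n = \sum_{m=1}^M w_m \sum_{n=1}^N \Psi^{N,k_N}_{mn} = \sum_{m=1}^M w_m$.

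The third step interprets the supremum as a $1$-Wasserstein distance. Consider the signed measure $T = \sum_{m} w_m\,\delta_{\Sstar_m} - \sum_{n} v_n\,\delta_{S_n}$ on $\spatialdomain$; by the mass-balance identity $T$ has total mass zero, so its Jordan decomposition $T = T_+ - T_-$ produces two nonnegative measures of equal total mass $c$. Since $T$ integrates any constant to zero, the functional $g\mapsto \int g\,\mathrm{d}T$ is invariant under adding constants to $g$; anchoring $g$ at one atom then bounds all its values by pairwise distances, so the supremum is finite and attained (it is a bounded finite-dimensional linear program in the values of $g$ at the atoms). Kantorovich--Rubinstein duality identifies this supremum with $c\,W_1(T_+/c,\,T_-/c)$, a $1$-Wasserstein distance between empirical probability measures, computable by an optimal-transport (min-cost-flow) linear program.

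The main obstacle is this last step: the Kantorovich--Rubinstein representation is classically stated for probability measures, whereas $w$ and $v$ may have mixed signs, so I cannot directly read off coupling marginals. The fix is to pass through the balanced signed measure $T$ and its Jordan decomposition; the zero-total-mass property inherited from the row-stochasticity of $\Psi^{N,k_N}$ is precisely what guarantees that $T_+$ and $T_-$ carry equal mass, so that a valid transport plan---and hence a genuine $1$-Wasserstein distance---exists. Everything else is routine bookkeeping with finite sums.
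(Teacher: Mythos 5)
Your proof is correct and follows essentially the same route as the paper's: write $\hat{\mu}$ exactly as a linear functional of $f$ at the training and target locations, then use that $f/L \in \lipfnsone$ (by \cref{assum:lipschitz}) to pass to the supremum; the sign difference in your expansion is immaterial under the absolute value. The only difference is that you additionally work out the Wasserstein reduction (mass balance from row-stochasticity of $\Psi^{N,k_N}$, Jordan decomposition, Kantorovich--Rubinstein), whereas the paper simply delegates that final claim to \citet[Proposition 12]{burt2025lipschitz} --- your version is self-contained and correct.
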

\begin{proof}
    The bias term $\hat{\mu}$ is given by
    \begin{align}
        \hat{\mu} &= \sum_{m=1}^M w_m f(\Sstar_m) - \sum_{n=1}^N v_n f(S_n).
    \end{align}
    If $L=0$, $f$ is constant and bias is $0$. Otherwise, $\frac{1}{L} f \in \lipfnsone$, and the inequality follows from \cref{assum:lipschitz}. The second part of the proposition is \citet[Proposition 12]{burt2025lipschitz}.
\end{proof}

\section{Additional Experimental Details for Simulation Studies}\label{app:simulation_details}
\subsection{Baseline Methods}\label{app:baselines}
We compare the proposed method with three baselines:
\begin{itemize}
    \item \textbf{Logistic Regression (LR)}: Fit a logistic regression model to the training data and evaluate the confidence intervals on the target data using the standard errors from the model.
    \item \textbf{Logistic Regression with Sandwich Estimator (LR-Sandwich)}: Fit a logistic regression model to the training data and use the sandwich estimator to compute the standard errors for the confidence intervals on the target data.
    \item \textbf{Weighted Logistic Regression (WLR)}: Fit a weighted logistic regression model to the training data, where the weights are determined by the ratio of the kernel density estimates of the covariate distribution in the training and target data. The weights are computed as follows:
    \begin{align}
    w_i = \frac{\hat{p}_T(X_i)}{\hat{p}_S(X_i)}
    \end{align}
    where $\hat{p}_T(X_i)$ is the kernel density estimate of the covariate distribution in the target data and $\hat{p}_S(X_i)$ is the kernel density estimate of the covariate distribution in the training data. The kernel density estimates are computed using Gaussian kernels with bandwidths selected using cross-validation. The weighted logistic regression is then fit using the weights $w_i$.
\end{itemize}

\subsection{Data Generation}\label{app:simulation-dgps}

\paragraph{Infill Simulation.} We generate the training locations uniformly on $[-1,1]^2$. We generate the target locations on $[-\texttt{scale},\texttt{scale}]^2$ for $\texttt{scale} = \{i/16\}_{i=1}^{16}$. We use a single covariate, $X$ that is equal to the first spatial coordinate. The expected value of the response variable is given by a $1/1+ \exp(-h(X))$, where $h(X)$ is a piecewise linear function,
\begin{align}
h(X) =
\begin{cases}
  X & \text{if } X < -0.125 \\
   0.875 - X & \text{if } -0.125 \leq X < 0.125 \\
  0.625 + X & \text{if } X \geq 0.125
\end{cases}
\end{align}
The response is a Bernoulli random variable with success probability given by the expected value. We generate 10000 training data points and 100 target locations. The training and target locations, conditional expectation of the response, and observed are shown in \cref{fig:logistic-simulation-data}.

Because the logit of the expected response surface is not linear, logistic regression is misspecified. When the target points are primarily between $[-0.125, 0.125]$, the expected response surface is approximately linear, with a negative slope. On the other hand, over the entire domain, the expected response surface increasing, and should have a positive slope. This means that the logistic regression model will be biased, and the bias will depend on the amount of distribution shift between the training and target data. The amount of distribution shift is controlled by the scale parameter, which determines how far the target locations are from zero.

\paragraph{Extrapolation Simulation.}
We generate data as in the previous experiment, except that the target data is now uniformly distributed on $[-j + 1, j+1] \times [-1, 1]$ for $j \in \{i/16\}_{i=1}^8$.
We also define a new function $h(X)$ that is a piecewise linear function with a different slope, defined as follows:
\begin{align}
h(X) =
\begin{cases}
  X & \text{if } X < 0.875 \\
   0.875 - X & \text{if } X \geq 0.875
\end{cases}
\end{align}
This function has a positive slope for $X < 0.875$ and a negative slope for $X \geq 0.875$. The expected response surface is given by $1/1+ \exp(-h(X))$, and the response is a Bernoulli random variable with success probability given by the expected value. As before we generate 10000 training points and 100 target points. We repeat the process for 250 datasets.

\end{document}